\documentclass[11pt,a4paper,oneside]{article}
\usepackage[top=3cm, bottom=3cm, left=2.5cm, right=2.5cm]{geometry}
\linespread{1.3}
\usepackage[english]{babel}
\usepackage[utf8x]{inputenc}
\usepackage[T1]{fontenc}

\usepackage{amsthm,amsmath,amssymb,mathrsfs,dsfont,mathtools}
\usepackage{bm}
\usepackage{graphicx}
\usepackage[colorinlistoftodos]{todonotes}
\usepackage[colorlinks=true, allcolors=blue, hypertexnames=false]{hyperref}

\usepackage{booktabs,subcaption}
\usepackage{hyperref}
\usepackage[all]{nowidow}
\usepackage{setspace}

\usepackage[ruled,vlined]{algorithm2e}
\usepackage{tikz} 
\usepackage{lipsum}
\usepackage{adjustbox}
\usepackage{lineno} 
\usepackage{macros}
\usepackage{verbatim,amssymb,epsfig}  
\usepackage{amsmath, amsfonts, amstext,amsthm}
\usepackage{mathtools}
\usepackage{enumerate}
\usepackage{enumitem}
\usepackage{natbib}
\usepackage{graphicx}
\usepackage[title]{appendix}
\usepackage{subcaption}
\usepackage{booktabs} 
\definecolor{revision_color}{HTML}{008080}
\usepackage{hyperref}
\usepackage{dsfont}
\hypersetup{
colorlinks = true,
allcolors = blue
}
\usepackage[font=small]{caption}
\usepackage{url} 
\usepackage{authblk}

\usepackage{multirow, makecell}

\usepackage{authblk}
\title{Marginally interpretable spatial logistic regression with bridge processes}
\date{}
\author[1]{Changwoo J. Lee \thanks{changwoo.lee@duke.edu}}
\author[1]{David B. Dunson} 
\affil[1]{Department of Statistical Science, Duke University}

\newtheorem{theorem}{Theorem}

\newtheorem{proposition}{Proposition}

\theoremstyle{definition}
\newtheorem{definition}{Definition}
\newtheorem{remark}{Remark}

\def\T{{ \mathrm{\scriptscriptstyle T} }}

\def\pibr{\normalfont{p_\textsc{br}}}
\def\pim{\normalfont{p_\textsc{m}}}

\def\bmbetam{\normalfont{\bm\beta^{\textsc{m}}}}
\def\bmbetamhat{\hat{\bm\beta}^\textsc{m}}

\begin{document}

\maketitle
\begin{abstract}
In including random effects to account for dependent observations, the odds ratio interpretation of logistic regression coefficients is changed from population-averaged to subject-specific. This is unappealing in many applications, motivating a rich literature on methods that maintain the marginal logistic regression structure without random effects, such as generalized estimating equations. However, for spatial data, random effect approaches are appealing in providing a full probabilistic characterization of the data that can be used for prediction. We propose a new class of spatial logistic regression models that maintain both population-averaged and subject-specific interpretations through a novel class of bridge processes for spatial random effects. 
These processes are shown to have appealing computational and theoretical properties, including a scale mixture of normal representation. The new methodology is illustrated with simulations and an analysis of childhood malaria prevalence data in Gambia.
\end{abstract}

\noindent%
{\it Keywords: Bayesian; Elliptical process; Marginal model; 
Model-based geostatistics; Random effects; Spatial binary data} 
\vfill

\section{Introduction}
Mixed-effects logistic regression with a spatial random effect serves as a canonical model for analyzing spatially indexed binary data. The generic form of the model is
\begin{equation}
     \mathrm{logit}[\pr\{Y_{ij} = 1 \mid \bfx_{ij}, u(s_i)\}] = \bfx_{ij}^\T\bm\beta + u(s_i), \quad u(\cdot) \sim \textrm{mean zero process}, \label{eq:model}
\end{equation}
where $Y_{ij}$ is the $j$th binary observation at the $i$th spatial location $s_i$ defining clusters ($i=1,\dots,n$, $j=1,\dots,N_i$), $\bfx_{ij} \in\bbR^{p}$ are covariates, and $\{u(s): s\in \scrS\}$ is a mean zero stochastic process defined on a spatial domain $\scrS$, with the most common choice being the Gaussian process. 
This generalized linear mixed effects model formulation \citep{Diggle1998-xw} provides a full characterization of the data, naturally accounting for spatial dependence, and allowing for prediction at new locations in a coherent probabilistic framework. 

When interpreting the regression coefficient $\bm\beta$, the logit link provides a familiar understanding in terms of log odds ratios. However, the inclusion of random effects changes the meaning of $\bm\beta$ from a marginal effect on the population to a conditional effect specific to the site.
Specifically, when the normal random effect $u$ is integrated out from \eqref{eq:model}, the induced model for $\logit\{\pr(Y_{ij} =1 \mid \bfx_{ij})\}$ is no longer a linear function of $\bfx_{ij}$ \citep{Zeger1988-ia}. 
There exists a rich literature on methods designed for marginally specified models, most notably generalized estimating equations \citep{Liang1986-dp}, but such approaches cannot provide probabilistic prediction at new locations. 
This creates difficulty for researchers who wish to maintain a population-averaged interpretation of $\bm\beta$ while also accounting for spatial dependence and making predictions. 
We refer the reader to \citet{Neuhaus1991-zk, Heagerty2000-nc, Hubbard2010-jf} for a detailed comparison of marginal and conditional models for clustered binary data.

For logistic models with independent random intercepts, \citet{Wang2003-ih} proposed a family of univariate bridge distributions as an alternative to normal random effects. 
In marginalizing random intercept logistic regression models over the bridge distribution, the resulting model also has a logistic form so that
$\logit\{\pr(Y_{ij} =1 \mid \bfx_{ij})\}$, where $u(s_i)$ is integrated out, is a linear function of $\bfx_{ij}$ with coefficients proportional to $\bm\beta$. 
Thus, bridge-distributed random effects allow both a conditional and marginal interpretation of regression coefficients, motivating applications in many different contexts \citep{Bandyopadhyay2010-ef,Tu2011-yc, Asar2021-sz}. However,
the bridge distribution has not been naturally extended to multivariate settings for correlated random effects.  

We propose a new class of marginally interpretable spatial logistic regression models based on a novel
spatial bridge process.
We identify the normal-scale mixture representation of the bridge distribution and propose its multivariate extension. 
In contrast to existing copula-based constructions that have bridge-distributed marginals \citep{Lin2010-sq, Li2011-kl, Parzen2011-bi,Boehm2013-xj, Swihart2014-gb}, the bridge process has appealing properties, such as transparent correlation structure and comes with significant computational benefits.  We defer all proofs to Supplementary Section~\ref{appendix:proof}.

\section{Marginally interpretable spatial logistic regression models}

\subsection{Review of bridge distribution and random intercept logistic models}

We first review bridge-distributed random intercept logistic regression models without spatial dependence. \citet{Wang2003-ih} studied a class of random-intercept distributions with density $\pibr(u;\phi)$ such that, after integrating out the random intercepts, the resulting marginal model remains logistic with its coefficients multiplied by $\phi$. 
This amounts to finding $\pibr(u; \phi)$ satisfying the integral identity
$\int_{-\infty}^\infty \mathrm{logit}^{-1}(\eta +u)\pibr(u; \phi) \rmd u = \mathrm{logit}^{-1}(\phi\eta)$ for any $\eta\in\bbR$, where $\phi$ must satisfy $\phi\in(0,1)$. The corresponding distribution, called the bridge distribution with logit link, has density 
\[
\pibr(u; \phi) = \sin (\phi \pi)/[2\pi\{\cosh (\phi u) + \cos(\phi \pi)\}], \quad u\in\bbR
\]
with parameter $\phi\in(0,1)$. 
See Figure~\ref{fig:1} (left) for a comparison with the normal distribution. 
Under the bridge-distributed random intercept logistic model, we have
\begin{align}
\logit[\pr\{Y_{ij} = 1 \mid \bfx_{ij}, u_i\}] &= \bfx_{ij}^\T\bm\beta + u_i, \quad u_i\iidsim \pibr(u; \phi) \label{eq:randint1} \\
\logit\{\pr(Y_{ij} =1 \mid \bfx_{ij})\} &= \phi\bfx_{ij}^\T\bm\beta = \bfx_{ij}^\T\bmbetam   \label{eq:randint2}
\end{align}
for $i=1,\dots,n$, $j=1,\dots,N_i$, where \eqref{eq:randint1} implies \eqref{eq:randint2} after integrating out $u_i$ so that $\bmbetam = 
\phi\bm\beta$ has an explicit marginal, population-averaged interpretation, where $\phi$ serve as an attenuation factor \citep{Gory2021-er}.  
Although the marginal probability of $Y_{ij}$ in equation \eqref{eq:randint2} depends only on  $\phi\bm\beta$, the parameter $\phi$ separately determines the variance of the random intercept. This implies that, after integrating out $u_i$, $\phi$ governs the degree of overdispersion in the cluster sums $\sum_{j=1}^{N_i} Y_{ij}$ relative to the binomial distribution \citep{Wang2004-ms}, which is formally stated in the following proposition in a simplified setting. 

\begin{proposition} 
\label{prop:varsumy}
Assume that covariates are constant within the cluster, i.e. $\bfx_{ij} \equiv \bfx_i$ for $j=1,\dots,N_i$. Then $\phi = E(\var(Y_{ij}\mid \bfx_i, u_i ))/\var(Y_{ij}\mid \bfx_i)$, the proportion of the variance of $Y_{ij}$ that is not due to the variability of $u_i$, and
\[
\textstyle{
\var\left(\sum_{j=1}^{N_i} Y_{ij} \mid \bfx_i \right) = N_i \var(Y_{ij}\mid \bfx_i)\{1 + (N_i -1)(1-\phi) \}}
\]
\end{proposition}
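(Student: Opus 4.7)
Let $\eta_i = \bfx_i^\T\bm\beta$ and $p_i(u) = \mathrm{logit}^{-1}(\eta_i + u)$, so that $E(Y_{ij}\mid \bfx_i, u_i) = p_i(u_i)$ and $\var(Y_{ij}\mid \bfx_i, u_i) = p_i(u_i)\{1-p_i(u_i)\}$ by the Bernoulli structure. The bridge identity \eqref{eq:randint2} gives $E(Y_{ij}\mid \bfx_i) = \mathrm{logit}^{-1}(\phi\eta_i)$, and since $Y_{ij}$ is still Bernoulli marginally, $\var(Y_{ij}\mid \bfx_i) = \mathrm{logit}^{-1}(\phi\eta_i)\{1-\mathrm{logit}^{-1}(\phi\eta_i)\}$. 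The whole proof then reduces to evaluating $E[p_i(u_i)\{1-p_i(u_i)\}]$ and feeding it into the laws of total variance and covariance.

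For the first claim, the key observation is that $p_i(u)\{1-p_i(u)\} = \partial p_i(u)/\partial \eta_i$. Since this derivative is uniformly bounded by $1/4$ and the bridge density $\pibr(u;\phi)$ is integrable, dominated convergence lets me exchange differentiation with the expectation, yielding
\[
E[p_i(u_i)\{1-p_i(u_i)\}] \;=\; \frac{\partial}{\partial \eta_i}\, E[p_i(u_i)] \;=\; \frac{\partial}{\partial \eta_i}\, \mathrm{logit}^{-1}(\phi\eta_i) \;=\; \phi\,\mathrm{logit}^{-1}(\phi\eta_i)\{1-\mathrm{logit}^{-1}(\phi\eta_i)\},
\]
where the second equality is the defining bridge identity. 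Dividing by $\var(Y_{ij}\mid \bfx_i)$ delivers the first formula.

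For the second claim, I would apply the law of total covariance to each off-diagonal term. Given $\bfx_i$ and $u_i$, the observations $Y_{i1},\dots,Y_{iN_i}$ are independent with common mean $p_i(u_i)$, so $\cov(Y_{ij},Y_{ij'}\mid \bfx_i, u_i) = 0$ for $j\neq j'$, and therefore $\cov(Y_{ij},Y_{ij'}\mid \bfx_i) = \var\{p_i(u_i)\}$. By the law of total variance applied to $Y_{ij}$, this equals $\var(Y_{ij}\mid \bfx_i) - E[\var(Y_{ij}\mid \bfx_i, u_i)] = (1-\phi)\var(Y_{ij}\mid \bfx_i)$ using part one. Summing gives
\[
\var\Big(\sum_{j=1}^{N_i} Y_{ij}\;\Big|\;\bfx_i\Big) = N_i\var(Y_{ij}\mid \bfx_i) + N_i(N_i-1)(1-\phi)\var(Y_{ij}\mid \bfx_i),
\]
which factors into the stated form.

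The only nontrivial step is the differentiation-under-the-integral trick; everything else is bookkeeping with the Bernoulli variance and conditional independence. I expect the main obstacle in a careful write-up to be justifying the interchange of $\partial/\partial\eta_i$ with $\int(\cdot)\pibr(u;\phi)\,du$, but this is routine via the uniform bound $p(1-p)\le 1/4$ and dominated convergence. An alternative, more computational route would be to substitute the explicit bridge density and evaluate the integral directly, but that appears considerably messier and less illuminating than exploiting the identity $p(1-p) = \partial p/\partial\eta$.
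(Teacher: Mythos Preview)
Your proof is correct and follows essentially the same route as the paper. The paper obtains the key identity $E[p_i(u_i)\{1-p_i(u_i)\}] = \phi\,\mathrm{logit}^{-1}(\phi\eta_i)\{1-\mathrm{logit}^{-1}(\phi\eta_i)\}$ by citing equation~(2.2) of \citet{Wang2003-ih}, which is exactly the $\eta$-derivative of the bridge identity that you recover via differentiation under the integral; for the second claim the paper applies the law of total variance directly to $\sum_j Y_{ij}$ rather than summing pairwise covariances, but the computation is otherwise identical.
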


Thus, as $\phi$ becomes closer to 1, the within-cluster correlation induced by the random intercept disappears, and the model reduces to an ordinary logistic regression model with independent observations, i.e. $\var(\sum_{j=1}^{N_i} Y_{ij} \mid \bfx_i ) = N_i \var(Y_{ij}\mid \bfx_i)$.

\subsection{Normal scale mixture representation}

For analyzing spatial binary data, we aim to develop a spatial extension of the bridge random intercept model \eqref{eq:randint1} with appealing properties, including a dual interpretation of coefficients. 
This requires constructing a mean-zero stochastic process for the spatial random effect whose finite-dimensional distributions have bridge-distributed marginals.
We first identify a scale mixture of normal representation of the bridge distribution, which forms the basis of our construction for multivariate extensions. 

\begin{theorem}
\label{thm:unimixture}
    The bridge distribution admits the scale mixture of normal representation,
   \begin{equation}
   \label{eq:unimixture}
   \pibr(u; \phi) = \frac{\sin (\phi \pi)}{2\pi\{\cosh (\phi u) + \cos(\phi \pi)\}} =  \int_0^\infty \mathrm{N}_1(u; 0, \lambda)\pim(\lambda; \phi) \rmd \lambda,    
   \end{equation}
with the mixing variable $\lambda$ with density $\pim(\lambda; \phi)$ 
   is equal in distribution to $ 2\phi^{-2}\sum_{k=1}^{\infty}A_kB_k/k^2$, where $A_k$ is an exponential random variable with mean 1, $B_k$ is a Bernoulli (binary) random variable mean $1-\phi^2$, both independently for $k\in\bbN$.   
\end{theorem}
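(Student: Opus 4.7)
The plan is to establish identity \eqref{eq:unimixture} by matching characteristic functions. Since the right-hand side is a normal scale mixture, its characteristic function coincides with the Laplace transform of the mixing distribution $\pim(\cdot;\phi)$ evaluated at $t^2/2$. Therefore the identity reduces to showing
\[
\psi_U(t) := \int_{-\infty}^{\infty} e^{itu}\pibr(u;\phi)\,du = E\!\left[\exp(-t^2\Lambda/2)\right],
\]
where $\Lambda = 2\phi^{-2}\sum_{k=1}^\infty A_kB_k/k^2$. Fourier uniqueness then upgrades the characteristic-function equality to equality of densities.

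First I would compute $\psi_U(t)$ by residue calculus. The substitution $v = \phi u$ reduces the integral to $\int e^{i(t/\phi)v}/[\cosh v + \cos(\phi\pi)]\,dv$, whose integrand is $2\pi i$-periodic in the imaginary direction. A tall rectangular contour in the upper half-plane captures the poles at $v = i[(2k+1)\pi \pm \phi\pi]$ for $k\geq 0$, and summing the resulting geometric series yields
\[
\psi_U(t) = \frac{\sinh(\pi t)}{\phi\sinh(\pi t/\phi)}.
\]

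Next I would compute the Laplace transform of $\Lambda$. Independence across $k$ factorises the expectation, and conditioning on $B_k$ gives, with $p=1-\phi^2$ and $s=t^2/\phi^2$,
\[
E\!\left[\exp(-(s/k^2)A_kB_k)\right] = (1-p) + \frac{p}{1+s/k^2} = \frac{\phi^2(k^2 + t^2)}{\phi^2 k^2 + t^2}.
\]
Applying Euler's product formula $\sinh(\pi x) = \pi x\prod_{k\geq 1}(1+x^2/k^2)$ once with $x = t$ in the numerator and once with $x = t/\phi$ in the denominator collapses the infinite product to exactly $\sinh(\pi t)/[\phi\sinh(\pi t/\phi)]$, matching $\psi_U(t)$.

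The main hurdles are (i) justifying contour closure in the residue calculation, which follows from exponential decay of the integrand on the vertical sides combined with the imaginary-direction periodicity of $\cosh$; and (ii) justifying convergence of the infinite product, which follows since the $k$-th factor expands as $1 - t^2(1-\phi^2)/(\phi^2 k^2) + O(k^{-4})$, making absolute convergence of $\sum_k \log(\cdot)$ immediate. The novel content is the explicit series representation for $\Lambda$; the conceptual crux is that both sides reduce to the same ratio of hyperbolic sines via the Euler product.
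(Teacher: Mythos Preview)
Your proposal is correct and follows essentially the same route as the paper: both arguments identify the characteristic function $\sinh(\pi t)/\{\phi\sinh(\pi t/\phi)\}$, factor it via the Euler--Weierstrass product for $\sinh$, and recognize each factor as the Laplace transform of a scaled $A_kB_k$. The only substantive difference is that the paper quotes the characteristic function from \citet{Wang2003-ih} rather than computing it by residues, so your version is slightly more self-contained.
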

Our key contribution is the explicit characterization of the mixing distribution, denoted as $\lambda\sim \pim(\phi)$; the proof is similar to \citet{West1987-gt}. 
The mixing distribution has a similar form to the squared Kolmogorov distribution that serves as a normal variance mixing distribution of the logistic distribution \citep{Andrews1974-ms}, except for the presence of Bernoulli random variables $\{B_k\}$. 
More details of the mixing distribution, such as its density, can be found in Supplementary Section~\ref{appendix:multibridge}. 

\subsection{Bridge processes for logit link}

We now introduce a multivariate extension of the univariate bridge distribution, which is uniquely determined by $\phi$ and the choice of a correlation kernel $\calR$.

\begin{definition}[Bridge processes]
\label{def:bp}
Let $\calR:\scrS\times \scrS\to [-1,1]$ be a positive semidefinite kernel with $\calR(s,s) = 1$ for every $s\in \scrS$. 
We say $\{u(s) \in \bbR: s \in \scrS\}$ is a bridge process with parameter $\phi\in(0,1)$ and correlation kernel $\calR$ if every finite collection $\bfu_{1:n}=\{u(s_1),\dots,u(s_n)\}^\T$ follows the multivariate bridge distribution defined as
\begin{equation}
\label{eq:multibridgedef}
 \bfu_{1:n} \given \lambda \sim \mathrm{N}_n\left(0,  \lambda \bfR\right), \quad \lambda \sim \pim(\phi).
\end{equation}
where $\bfR$ is a $n\times n$ matrix with $(i,j)$th element $\calR(s_i,s_j)$. Here $\mathrm{N}_n(\bm\mu, \Sigma)$ denotes  $n$-dimensional multivariate normal distribution with mean $\bm\mu$ and covariance $\bm\Sigma$.
\end{definition}

\begin{figure}
    \centering
    \includegraphics[width=\linewidth]{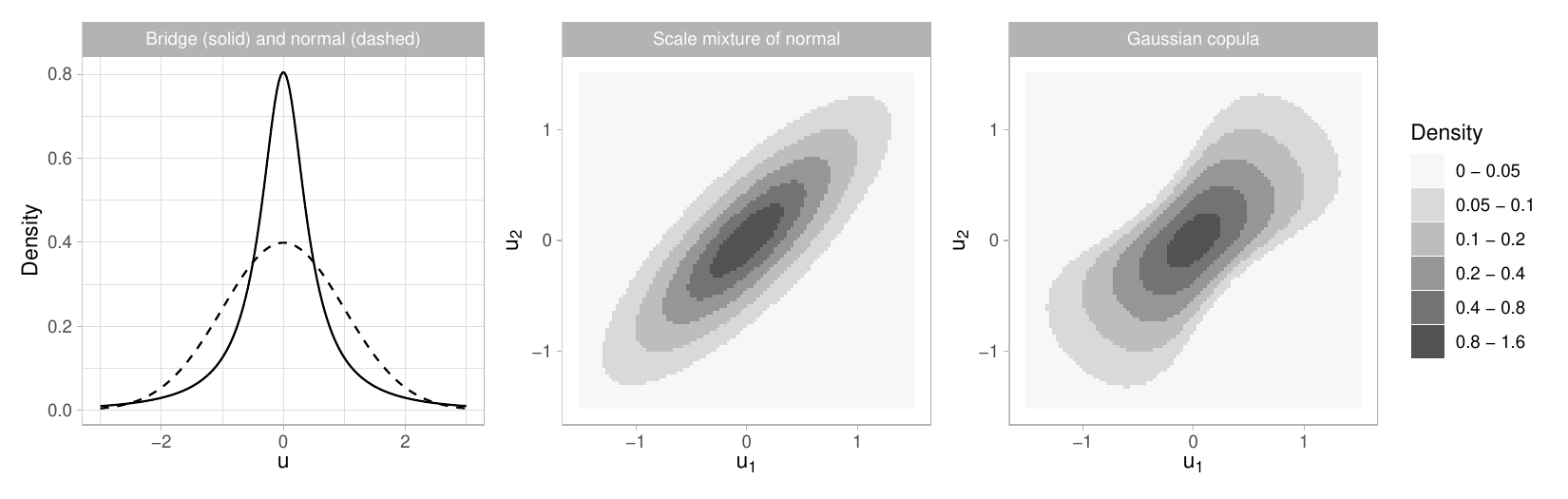}
    \caption{(Left) Density of the bridge distribution with $\phi=(1+3/\pi^2)^{-1/2}\approx 0.876$ (solid) and normal distribution (dashed), both with unit variance. (Center, Right) Two different bivariate distributions with bridge-distributed marginals with parameter $\phi$ based on the scale mixture of normal and Gaussian copula, both with correlation 0.8.}
    \label{fig:1}
\end{figure}

The requirement that $\calR(s,s)= 1$ is crucial, which ensures bridge-distributed marginals through Theorem~\ref{thm:unimixture}. Another key requirement for a stochastic process to be used for spatial prediction is Kolmogorov consistency, meaning that finite-dimensional realizations \eqref{eq:multibridgedef} must be compatible under marginalization; see \citet[][\S3.1]{Banerjee2014-yg}.
Further details on multivariate bridge distributions, including their densities, are available in Supplementary Section~\ref{appendix:multibridge}.

Existing multivariate and process extensions of the bridge distribution rely on copula-based formulations \citep{Lin2010-sq, Li2011-kl, Parzen2011-bi, Boehm2013-xj, Swihart2014-gb}. For example, with a Gaussian copula, one can define a process with bridge-distributed marginals through a nonlinear, coordinatewise transformation of a mean-zero, unit-variance Gaussian process $\zeta(\cdot)$ via $\zeta \mapsto F_{B}^{-1}(\Phi(\zeta); \phi)$, where $\Phi$ is a standard normal c.d.f. and $F_B^{-1}(\cdot; \phi)$ is the inverse c.d.f. of the bridge distribution. However, compared to the copula-based formulation, the proposed bridge process in Definition~\ref{def:bp} through a scale mixture of normals offers several practical benefits. 

First, by the scale mixture of normal construction, the realization of the proposed bridge process is elliptically symmetric \citep{Fang1990-ub}, meaning that the contours of the probability density form an ellipse/ellipsoids; see Figure~\ref{fig:1} for a comparison with a copula-based formulation. This elliptical symmetry leads to the following proposition.
\begin{proposition}
\label{prop:corr}
Let $\{u(s): s\in \scrS\}$ be a bridge process with parameter $\phi$ and kernel $\calR$. Then $\corr\{u(s),u(s')\} = \calR(s,s')$ for any $s,s' \in \scrS$. 
\end{proposition}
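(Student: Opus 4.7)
The plan is to exploit the scale mixture of normal representation from Definition~\ref{def:bp} and compute both the covariance and the marginal variances by conditioning on the mixing scale $\lambda$. Since the only randomness beyond the Gaussian draw comes from the scalar $\lambda$, the standard laws of total variance and covariance reduce the claim to a single one-line cancellation.

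Concretely, I would fix two locations $s, s' \in \scrS$ and consider the bivariate vector $(u(s), u(s'))^\T$. By Definition~\ref{def:bp}, conditional on $\lambda \sim \pim(\phi)$, this vector is mean-zero bivariate normal with covariance matrix $\lambda\bigl(\begin{smallmatrix}1 & \calR(s,s') \\ \calR(s,s') & 1\end{smallmatrix}\bigr)$. The law of total variance gives
\[
\var\{u(s)\} = E[\var\{u(s)\mid\lambda\}] + \var[E\{u(s)\mid\lambda\}] = E(\lambda) + 0 = E(\lambda),
\]
and the same holds for $u(s')$. The law of total covariance gives
\[
\cov\{u(s),u(s')\} = E[\cov\{u(s),u(s')\mid \lambda\}] + \cov[E\{u(s)\mid\lambda\},E\{u(s')\mid\lambda\}] = E(\lambda)\calR(s,s').
\]
Forming the ratio then yields $\corr\{u(s),u(s')\} = \calR(s,s')$.

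The only thing that is not purely formal is the need to know that $E(\lambda)$ is finite and strictly positive, so that the correlation is well-defined. This is the step I expect to require the most care, but it is controlled directly by Theorem~\ref{thm:unimixture}: the representation $\lambda \stackrel{d}{=} 2\phi^{-2}\sum_{k=1}^\infty A_k B_k/k^2$ with independent $A_k \sim \mathrm{Exp}(1)$ and $B_k \sim \mathrm{Bernoulli}(1-\phi^2)$ gives, by monotone convergence,
\[
E(\lambda) = \frac{2}{\phi^2}\sum_{k=1}^\infty \frac{E(A_k)E(B_k)}{k^2} = \frac{2(1-\phi^2)}{\phi^2}\cdot\frac{\pi^2}{6} = \frac{\pi^2(1-\phi^2)}{3\phi^2},
\]
which is finite and positive for every $\phi\in(0,1)$. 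As a sanity check, this recovers the univariate bridge variance and is consistent with the value $\phi = (1+3/\pi^2)^{-1/2}$ reported in Figure~\ref{fig:1} for the unit-variance case. Alternatively, the same conclusion can be derived in one stroke by noting that the joint distribution is elliptical \citep{Fang1990-ub}: whenever the generating scalar has a finite second moment, the correlation matrix of an elliptical vector coincides with the scale (dispersion) matrix, which here is $\bfR$ by construction.
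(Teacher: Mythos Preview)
Your proposal is correct and follows essentially the same approach as the paper: condition on the mixing variable $\lambda$, apply the law of total covariance to obtain $\cov\{u(s),u(s')\}=E(\lambda)\calR(s,s')$, and compute $E(\lambda)=\pi^2(\phi^{-2}-1)/3$ from the series representation in Theorem~\ref{thm:unimixture}. Your additional care in checking that $E(\lambda)\in(0,\infty)$ and the remark on the elliptical interpretation are nice touches but not essential differences.
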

The Proposition~\ref{prop:corr} implies that popular choices of spatial kernels $\calR_\rho$ with parameter(s) $\rho$ retain familiar interpretations in terms of correlation between spatial random effects $u(\cdot)$, such as the range and smoothness parameters of Mat\'ern kernels.  
This is in contrast to the copula-based formulation, where the induced dependence structure via a latent process is obscure due to the complex nonlinear transformations involved, i.e. $\corr\{F_{B}^{-1}(\Phi(\zeta(s));\phi), F_{B}^{-1}(\Phi(\zeta(s'));\phi)\}\neq \corr\{\zeta(s), \zeta(s')\}$ for a latent Gaussian process $\zeta(\cdot)$.

There are elliptical distributions that cannot be represented as scale mixtures of normals \citep{Gomez-Sanchez-Manzano2006-rb}. This raises the natural question of whether there exists a bridge process that does not follow Definition~\ref{def:bp} but does have the elliptical symmetry property. 
The following proposition shows that the answer is no, further supporting our proposed formulation.

\begin{proposition}
\label{prop:characterization}
    The bridge process is the only process that has (i) Kolmogorov consistency, (ii) bridge-distributed marginals, and (iii) elliptically symmetric realizations.
\end{proposition}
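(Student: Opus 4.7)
The plan is to pin down the characteristic function of every finite-dimensional distribution by combining the elliptical structure with the prescribed marginals, and then invoke Theorem~\ref{thm:unimixture}. Fix $s_1,\dots,s_n\in\scrS$ and set $\bfu_{1:n}=(u(s_1),\dots,u(s_n))^\T$. Since the bridge distribution is symmetric around zero, (iii) yields a characteristic function of the form $\chi_n(\mathbf{t})=\psi_n(\mathbf{t}^\T \bm\Sigma_n\mathbf{t})$ for some characteristic generator $\psi_n:[0,\infty)\to\bbR$ and some positive semidefinite dispersion matrix $\bm\Sigma_n$. By (ii) all marginals are bridge$(\phi)$, so the diagonal entries of $\bm\Sigma_n$ are all equal; absorbing the common value into the generator, I take $\bm\Sigma_n=\bfR_n$ to have unit diagonal.

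Next I would show that $\psi_n$ does not depend on $n$. Setting the last component of $\mathbf{t}$ to zero in $\chi_{n+1}$ and invoking Kolmogorov consistency (i) gives $\chi_n(\mathbf{t})=\psi_{n+1}(\mathbf{t}^\T\bfR_n\mathbf{t})$, and comparing with $\chi_n(\mathbf{t})=\psi_n(\mathbf{t}^\T\bfR_n\mathbf{t})$ forces $\psi_{n+1}\equiv\psi_n$ once the unit-diagonal normalization is fixed. Call this common generator $\psi$. The one-dimensional case $\chi_1(t)=\psi(t^2)$ must then equal the characteristic function of the bridge$(\phi)$ distribution, which pins $\psi$ down uniquely via (ii).

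Theorem~\ref{thm:unimixture} evaluates $\psi$ explicitly: taking the Fourier transform of both sides of \eqref{eq:unimixture} and applying Fubini yields $\psi(r)=\int_0^\infty e^{-\lambda r/2}\pim(\lambda;\phi)\rmd\lambda$ for all $r\geq 0$. Substituting back, for every $n$ and every $\mathbf{t}\in\bbR^n$,
\[
\chi_n(\mathbf{t})=\int_0^\infty \exp\!\left\{-\tfrac{1}{2}\lambda\,\mathbf{t}^\T\bfR_n\mathbf{t}\right\}\pim(\lambda;\phi)\rmd\lambda,
\]
which is precisely the characteristic function of the scale mixture $\bfu_{1:n}\mid\lambda\sim\mathrm{N}_n(0,\lambda\bfR_n)$ with $\lambda\sim\pim(\phi)$. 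Uniqueness of characteristic functions then forces $\bfu_{1:n}$ to follow the multivariate bridge distribution of Definition~\ref{def:bp}, and Proposition~\ref{prop:corr} identifies the kernel as $\calR(s,s')=\corr\{u(s),u(s')\}$, so the process is the bridge process of parameter $\phi$ and this correlation kernel.

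The main obstacle is the normalization step: the elliptical representation $(\psi,\bm\Sigma)$ is unique only up to a positive rescaling $(\psi(\cdot/c),c\bm\Sigma)$, so one has to exploit (ii) to pick a canonical normalization (unit-diagonal dispersion, shared across dimensions) before consistency can be used to conclude that a single generator $\psi$ governs all finite dimensions. Once that is arranged, everything else is a clean Fubini computation on characteristic functions combined with the explicit mixing representation from Theorem~\ref{thm:unimixture}.
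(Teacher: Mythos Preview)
Your argument is correct and self-contained. The paper takes a quicker but less elementary route: it simply invokes \citet{Kano1994-zi}, who proved that a family of elliptical distributions is Kolmogorov consistent if and only if it arises as a scale mixture of normals with a mixing distribution that does not depend on the dimension. Once that representation is granted, the bridge-distributed marginal determines the mixing distribution uniquely by the injectivity of the Laplace transform (cf.\ the proof of Theorem~\ref{thm:unimixture}), and the result follows.

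What you do differently is effectively re-derive the relevant special case of Kano's theorem by hand: normalize the dispersion matrix to unit diagonal using (ii), argue that the characteristic generator $\psi_n$ is dimension-free, and then identify $\psi$ via the one-dimensional marginal together with the integral representation from Theorem~\ref{thm:unimixture}. Your direct characteristic-function computation gives a fully explicit argument without external references, at the cost of having to manage the scaling indeterminacy in $(\psi,\bm\Sigma)$ that you flag at the end. One minor remark: once the unit-diagonal normalization is fixed, your step~3 (showing $\psi_{n+1}\equiv\psi_n$ via consistency) is already implied by step~2, since each $\psi_n(t^2)$ must equal the bridge characteristic function; Kolmogorov consistency is then used only to ensure the normalized $\bfR_n$'s agree across dimensions and hence arise from a single kernel $\calR$.
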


Next, because the bridge process specifies a transparent correlation structure for the spatial random effect, it can be incorporated naturally into the standard model-based geostatistics framework, which consists of specifying a data model, a process model, and a parameter model \citep{Berliner1996-nd,Wikle2019-tq}. In this framework, the process model introduces dependence structure on the linear predictor scale, typically focusing on second-order information to characterize spatial dependence \citep{Gelfand2016-zm}. For binary data with a logit link, \citet{Diggle2016-lc} refer to this formulation as the standard geostatistical model for prevalence data. 
Finally, the normal scale mixture representation of the bridge process provides significant benefits for posterior computation together with P\'olya-Gamma augmentation \citep{Polson2013-gb}, which will be elaborated in detail in Section 3.2.

\subsection{Spatial random effect modeling with bridge processes}

We introduce a marginally interpretable spatial logistic regression model with a bridge process,
\begin{equation}
     \mathrm{logit}[\pr\{Y_{ij} = 1 \mid \bfx_{ij}, u(s_i)\}] = \bfx_{ij}^\T\bm\beta + u(s_i), \quad u(\cdot) \sim \mathrm{bridge\,process}(\phi, \calR), \label{eq:proposedmodel}
\end{equation}
for $i=1,\dots,n$ and $j=1,\dots,N_i$. As in the random intercept model, the proposed model \eqref{eq:proposedmodel} induces a marginal logistic regression model \eqref{eq:randint2}, where $\bm\beta$ and $\phi \bm\beta$ carry site-specific and population-averaged interpretations, respectively. 
To provide a clearer picture of the induced dependence structure between binary outcomes, we introduce an equivalent representation of \eqref{eq:proposedmodel} based on the thresholding of latent variables. 
\begin{proposition}
\label{prop:latentrep}
    The spatial logistic model with bridge process \eqref{eq:proposedmodel} is equivalent to
    \begin{equation}
    \label{eq:latentrep}
        Y_{ij} = 1(Z_{ij}>0), \quad Z_{ij} = \bfx_{ij}^\T\bm\beta + u(s_i) + \epsilon_{ij}, \quad u(\cdot)\sim \mathrm{bridge\,process}(\phi, \calR),
    \end{equation}
    where $\epsilon_{ij}$ follows independent standard logistic distributions. Furthermore, the marginal distribution of $u(s_i) + \epsilon_{ij}$ is a mean zero logistic distribution with scale parameter $\phi^{-1}$.
\end{proposition}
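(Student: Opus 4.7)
The plan is twofold: first, verify the equivalence between \eqref{eq:proposedmodel} and \eqref{eq:latentrep} by matching the conditional distribution of $\{Y_{ij}\}$ given $u(\cdot)$, and second, derive the marginal law of $u(s_i)+\epsilon_{ij}$ directly from the defining integral identity of the bridge distribution recalled in Section~2.1.

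For the equivalence, I would condition on $\bfx_{ij}$ and $u(\cdot)$ in \eqref{eq:latentrep}. Since $\epsilon_{ij}$ is independent standard logistic with CDF $\mathrm{logit}^{-1}$ and is symmetric about zero,
\[
\pr\{Y_{ij}=1\mid \bfx_{ij}, u(s_i)\} = \pr\{\epsilon_{ij} > -\bfx_{ij}^\T\bm\beta - u(s_i)\} = \mathrm{logit}^{-1}(\bfx_{ij}^\T\bm\beta + u(s_i)),
\]
matching the conditional specification in \eqref{eq:proposedmodel}. The mutual independence of $\{\epsilon_{ij}\}$ gives conditional independence of $\{Y_{ij}\}$ given $u(\cdot)$, and since both formulations share the same bridge process $u(\cdot)$, the induced joint laws of $\{Y_{ij}\}$ coincide.

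For the marginal distribution of $u(s_i)+\epsilon_{ij}$, I would compute its CDF by conditioning on $u(s_i)$ and integrating against the bridge marginal density guaranteed by Definition~\ref{def:bp}:
\[
\pr\{u(s_i)+\epsilon_{ij}\le t\} = \int_{-\infty}^{\infty}\mathrm{logit}^{-1}(t-u)\,\pibr(u;\phi)\,\rmd u.
\]
Using the symmetry $\pibr(-u;\phi) = \pibr(u;\phi)$, which is immediate because the density depends on $u$ only through $\cosh(\phi u)$, a change of variable $u\mapsto -u$ reduces this to $\int \mathrm{logit}^{-1}(t+u)\pibr(u;\phi)\,\rmd u$. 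Applying the defining bridge identity $\int \mathrm{logit}^{-1}(\eta+u)\pibr(u;\phi)\,\rmd u = \mathrm{logit}^{-1}(\phi\eta)$ with $\eta=t$ shows that the CDF equals $\mathrm{logit}^{-1}(\phi t)$, which is precisely the CDF of a mean-zero logistic distribution with scale $\phi^{-1}$.

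No substantial obstacle is anticipated; the argument reduces to routine bookkeeping with CDFs and a single sign change that invokes symmetry of $\pibr$ to bring the integral into the form of the bridge identity.
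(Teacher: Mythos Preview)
Your proof is correct. The first part (equivalence of the threshold representation and the conditional logistic specification) proceeds essentially as in the paper, via the CDF of the standard logistic error.

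For the second part, however, you take a genuinely different route. The paper establishes that $u(s_i)+\epsilon_{ij}$ is logistic with scale $\phi^{-1}$ by multiplying characteristic functions: it uses $E(e^{it\epsilon})=\pi t/\sinh(\pi t)$ for the standard logistic and $E(e^{itu})=\sinh(\pi t)/\{\phi\sinh(\pi t/\phi)\}$ for the bridge marginal (from \citet{Wang2003-ih}), whose product is $\pi t\phi^{-1}/\sinh(\pi t/\phi)$, the characteristic function of a logistic with scale $\phi^{-1}$. Your argument instead works at the level of CDFs, exploiting the evenness of $\pibr$ to reduce $\int\mathrm{logit}^{-1}(t-u)\pibr(u;\phi)\rmd u$ to the defining bridge identity itself. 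Your approach is more self-contained, since it invokes only the integral identity that \emph{defines} the bridge distribution rather than its characteristic function, which must be imported separately. The characteristic-function route, on the other hand, makes the convolution structure transparent and would generalize more readily if one wanted analogous statements for other link functions or error laws.
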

By multiplying $\phi$ at both sides of the second display of \eqref{eq:latentrep}, it is evident that $\phi \bm\beta$ carries population-averaged interpretation since  $
\phi u(s_i) + \phi\epsilon_{ij}$ follows the standard logistic distribution.

\section{Posterior inference}

\subsection{Empirical Bayes approach for attenuation factor}

The proposed marginally interpretable spatial logistic model \eqref{eq:proposedmodel} has parameters $(\bm\beta, \phi, \rho)$, where $\rho$ denotes a correlation kernel parameter. The complete data likelihood, including all latent variables, is given by
\begin{align}
  \calL_{\mathrm{comp}}(\bm\beta, \phi, \rho, \bfu_{1:n}, \lambda) = \mathrm{N}_n(\bfu_{1:n}; \bm{0}, \lambda \bfR_\rho)\pim(\lambda; \phi) \prod_{i=1}^n \left\{\prod_{j=1}^{N_i}\frac{\exp(\bfx_{ij}^\T\bm\beta + u(s_i))^{y_{ij}}}{1+\exp(\bfx_{ij}^\T\bm\beta + u(s_i))}\right\}  \label{eq:complete}
\end{align}
where $\mathrm{N}_n(\bfu_{1:n}; \bm{0}, \lambda \bfR_\rho)$ is a $n$-dimensional mean zero covariance $\lambda \bfR_\rho$ multivariate normal density evaluated at $\bfu_{1:n}$. The expression \eqref{eq:complete} can be viewed as a usual spatial logistic regression with a Gaussian process random effect with further hierarchical formulation on $\lambda$, where mixing density $\pim(\lambda;\phi)$ is now part of the complete data likelihood. The likelihood $\calL(\bm\beta, \phi, \rho) = \int_{0}^\infty \int_{\bbR^n}\calL_{\mathrm{comp}}(\bm\beta, \phi, \rho, \bfu_{1:n}, \lambda) \rmd\bfu_{1:n}\rmd\lambda$ involves high-dimensional integrations which introduces significant computational challenges. We therefore adopt a Bayesian framework, employing Markov chain Monte Carlo for posterior inference on the key parameters of interest, site-specific $\bm\beta$ and population-averaged effects $\bmbetam = \phi\bm\beta$, as well as prediction of probability of binary response at new spatial locations. 

However, when observed data is purely spatial and does not have replicates (i.e., with only a single realization $\bfu_{1:n}$ from the bridge process), fully Bayesian inference for $(\bm\beta, \phi, \rho)$ faces several challenges. 
It is well known that in spatial generalized linear models with Gaussian process random effects, the joint estimation of marginal variance and spatial dependence parameters is difficult \citep{Zhang2004-vn}. This issue becomes even more pronounced in our setting, as $\phi$ acts as a hyperparameter in $\lambda \sim \pim(\phi)$ with a deeper hierarchical formulation. While $\phi$ solely controls the marginal variance of the latent bridge process, in the absence of replicates, data provide limited information about $\phi$, resulting in weak identifiability of $\phi$ where the posterior of $\phi$ is highly influenced by its prior \citep{Garrett2000-py}.

Although a fully Bayesian approach is possible, we suggest an empirical Bayes procedure that first finds a point estimate $\hat\phi$ and then computes the posterior of $(\bm\beta, \rho)$ given $\phi = \hat\phi$. Instead of using marginal maximum likelihood for determining $\hat{\phi}$, which involves high-dimensional integration and is sensitive to model misspecification, we estimate $\hat{\phi}$ by maximizing a composite likelihood based on univariate and bivariate marginals, providing greater robustness \citep{Varin2011-ep}. Specifically, assuming $N_i \ge 2$ for all $i$, we adopt the two-stage composite likelihood method of \citet{Zhao2005-qf}. In the first stage, leveraging the connection with the marginal logistic regression, we first get a maximum likelihood estimate (MLE) $\bmbetamhat$ from the marginal logistic model \eqref{eq:randint2} ignoring the cluster structure. In the second stage, given $\bmbetamhat$, we set $\hat\phi = \argmax_{\phi\in(0,1)} \calL_{\mathrm{pair}}(\phi; \bmbetam = \bmbetamhat)$ using a pairwise composite likelihood constructed from within-cluster pairs, which is
\begin{equation}
\label{eq:pair}
    \calL_{\mathrm{pair}}(\phi; \bmbetam) = \prod_{i=1}^n\prod_{j<j'}\int_0^\infty \frac{\exp(\phi^{-1}\bfx_{ij}^\T\bmbetam + u(s_i))^{y_{ij}}}{1+\exp(\phi^{-1}\bfx_{ij}^\T\bmbetam + u(s_i))}\frac{\exp(\phi^{-1}\bfx_{ij'}^\T\bmbetam + u(s_i))^{y_{ij'}}}{1+\exp(\phi^{-1}\bfx_{ij'}^\T\bmbetam + u(s_i))} \pibr(u(s_i); \phi) \rmd u(s_i).
\end{equation}
It follows from \citet{Zhao2005-qf} that, assuming conditional \eqref{eq:randint1} and marginal models \eqref{eq:randint2} are correctly specified but without assumptions on the dependence structure of spatial random effect $u(\cdot)$, the resulting empirical Bayes estimator $\hat\phi$ is consistent under increasing-domain asymptotics. This holds because $\bmbetamhat$ is a solution of the unbiased estimating equation arising from the marginal logistic model, and the pairwise likelihood utilizes the bridge-distributed marginal of the random effect but not the spatial dependence information.

\subsection{Posterior computation}
We describe a two-stage posterior computation procedure in Algorithm~\ref{alg:gibbs} based on the empirical Bayes estimate of $\phi$; a description of a fully Bayesian approach is available in Supplementary Section~\ref{appendix:post}.  
For simplicity, we focus on prior $p(\bm\beta,\rho) = p(\bm\beta)p(\rho)$ with a normal prior for $\bm\beta$ and some proper prior on $\rho$. 

\begin{algorithm}
\caption{Posterior computation procedure with empirical Bayes estimator for $\phi$.}
\hspace{-3mm}\textbf{Stage 1}. Find  empirical Bayes estimate $\hat\phi$ by following steps:\\
{[1]} Compute maximum likelihood estimate $\bmbetamhat$ from simple logistic regression \eqref{eq:randint2}.\\
{[2]} Set $\hat\phi = \argmax_{\phi\in(0,1)}\calL_{\mathrm{pair}}(\phi; \bmbetam = \bmbetamhat)$ in eq. \eqref{eq:pair} \\
\hspace{-3mm}\textbf{Stage 2}. Compute the posterior by MCMC by repeating the following steps: \\
\textbf{Repeat}:\\
 {[1]} Sample $\bm\beta \sim [\bm\beta \mid \lambda^{\mathrm{(old)}}, \rho^{\mathrm{(old)}},  \bm\omega^{\mathrm{(old)}}]$ from multivariate normal. \\
 {[2]} Sample $\rho \sim [\rho\mid\lambda^{\mathrm{(old)}}, \bm\beta,  \bm\omega^{\mathrm{(old)}}]$ using adaptive Metropolis-Hastings.\\
 {[3]} Sample $\lambda \sim [\lambda\mid\rho, \bm\beta, \bm\omega^{\mathrm{(old)}}]$ using independent Metropolis-Hastings\\
 {[4]} Sample $\bfu_{1:n}\sim [\bfu_{1:n}\mid \lambda, \rho, \bm\beta,\bm\omega^{\mathrm{(old)}}]$ from multivariate normal. \\
 {[5]} Sample $\omega_{ij} \sim [\omega_{ij} \mid \bm\beta, \bfu_{1:n}]$ from P\'olya-Gamma, independently for all $i,j$.\\
\label{alg:gibbs}
\end{algorithm}

Stage 2 of Algorithm~\ref{alg:gibbs} describes a partially collapsed Gibbs sampler \citep{Van-Dyk2008-dr}, and we outline key strategies involved. Writing $\bfY=\{y_{ij}\}$, $\bfX = \{\bfx_{ij}\}$,  augmented data $\bm\omega = \{\omega_{ij}\}$ and $p_{\textsc{pg}}$ as the density of the P\'olya-Gamma$(1,0)$ distribution, the complete data model via P\'olya-Gamma augmentation \citep{Polson2013-gb} and prior becomes 
\begin{align}
  p(\bfY, \bm\omega \mid \bfX, \bm\beta, \bfu_{1:n}) &= \prod_{i=1}^n\prod_{j=1}^{N_i}\exp\Big[\Big(y_{ij}-\frac{1}{2}\Big)\{\bfx_{ij}^\T\bm\beta +u(s_i)\}- \frac{\omega_{ij}}{2}\{\bfx_{ij}^\T\bm\beta +u(s_i)\}^2\Big]\frac{p_{\textsc{pg}}(\omega_{ij})}{2} \label{eq:expandedmodel1}\\
 (\bfu_{1:n}\mid \lambda) &\sim \mathrm{N}_n(0, \lambda \bfR), \quad \lambda \sim p_{\textsc{m}(\hat\phi)}, \quad  p(\bm\beta, \rho) = \mathrm{N}_p(\bm\beta;\bm\mu_\beta, \bm\Sigma_\beta)\times p(\rho) \label{eq:expandedmodel2}
\end{align}
where \eqref{eq:expandedmodel1} satisfies $\int p(\bfY,\bm\omega\mid \bfX,\bm\beta,\bfu_{1:n}) \rmd \bm\omega = p(\bfY \mid \bfX,\bm\beta,\bfu_{1:n})$ corresponding to \eqref{eq:model}. The log-likelihood conditional on $\bm\omega$ is a quadratic function in both $\bm\beta$ and $\bfu_{1:n}$. 
Since the spatial random effect $\bfu_{1:n}$ is normal conditional on mixing variable $\lambda$, $\bfu_{1:n}$ can be analytically integrated out, leading to partial collapsing in steps 1 to 3 in Stage 2.  
This leads to improved mixing, especially for the intercept term in $\bm\beta$ that is often highly correlated with $\bfu_{1:n}$. Conditionally on P\'olya-Gamma variables $\bm\omega$, we update $\bm\beta$ and $\bfu_{1:n}$ from multivariate normal leveraging conditional conjugacy. When updating $\lambda$, we use Metropolis-Hastings with an independent proposal from the prior. 
The remaining steps and posterior prediction procedure are straightforward, and we defer the detailed derivations to Supplementary Section~\ref{appendix:post}.

\subsection{Scalable computation with low-rank dependence structure}
\label{subsec:scalable}
The Algorithm~\ref{alg:gibbs} involves several inversions and determinant calculations of the $n\times n$ matrices, creating a computational bottleneck when the number of spatial locations $n$ is large. Specifically, step 1 of Stage 2 involves the inversion of an $n\times n$ matrix, and steps 2, 3, and 4 of Stage 2 involve the evaluation of the density and sampling from the multivariate normal distribution of dimension $n$. Without special structures in the corresponding covariance or precision matrices, the computation becomes prohibitive as the number of spatial locations $n$ increases.

Due to the normal mixture representation, several existing computational strategies for Gaussian processes can be easily integrated into bridge processes. One way is to introduce a low-rank structure on the correlation kernel, following the strategy of \citet{Finley2009-bc} for Gaussian processes. Given a positive definite correlation kernel $\calR$, denote $\bfR_{qq} = [\calR(\tilde{s}_k,\tilde{s}_{k'})]_{k,k'=1}^q$ as a $q\times q$ matrix formed from $q$ knot locations $\tilde{s}_k\in \scrS$, $k=1,\dots,q$. Then, we consider a correlation kernel $\tilde{\calR}$ defined as   
\begin{equation}    
\label{eq:corr_lowrank}
\tilde{\calR}(s,s') = \bfr(s)^\T \bfR_{qq}^{-1}\bfr(s') + 1(s=s') \{1- \bfr(s)^\T \bfR_{qq}^{-1}\bfr(s')\},
\end{equation}
where $r(s) = [\calR(s,\tilde{s}_k)]_{k=1}^q$ is a $q\times 1$ vector and the term $1(s=s') \{1- \bfr(s)^\T \bfR_{qq}^{-1}\bfr(s')\}$ in \eqref{eq:corr_lowrank} ensures that $\tilde{\calR}(s,s) = 1$. Using the Woodbury matrix identity and the matrix determinant lemma, the computation related to the inversion and calculation of the determinant of $n\times n$ matrices can be reduced to those of $q\times q$ matrices, which gives huge computational benefits when $q \ll n$; see Supplementary Section~\ref{appendix:post} for details. In light of Proposition~\ref{prop:characterization}, the seamless incorporation of scalable Gaussian process methods through a hierarchical formulation is a distinctive feature of bridge processes.

One key requirement is that the new correlation kernel $\tilde{R}$ must maintain a unit diagonal $\tilde{\calR}(s,s) \equiv 1$. Other possible low-rank constructions include full-scale approximation \citep{Sang2012-vj}; however, ordinary predictive processes \citep{Banerjee2008-iw} and nearest-neighbor Gaussian processes \citep{Datta2016-dx} are not directly applicable because they induce heterogeneous marginal variances.

\section{Simulation studies}
\subsection{Comparison with existing approaches}
\label{subsec:sim1}

First, we conduct a simulation study to compare the proposed approach with existing methods for both marginal and conditional inference. For the proposed bridge process random effect, we consider both the empirical Bayes approach (BrP) and a fully Bayesian approach that introduces a prior on~$\phi$ (BrP-FB). 
As a marginal effect comparison, we consider the spatial generalized estimating equation (SpGEE) method based on pairwise log-odds ratios \citep{Cattelan2018-pp}. 
As a conditional effect comparison, we consider a fully Bayesian spatial logistic model with Gaussian process (GP) random effects.   

We generate data based on the spatial logistic model \eqref{eq:model} under two different processes with bridge-distributed marginals, one with the proposed model and another based on the Gaussian copula process with bridge marginals. 
This ensures both population-averaged effects $\bmbetam = \phi \bm\beta$ and site-specific effects $\bm\beta$ are well-defined. We consider attenuation factors $\phi \in \{0.7, 0.9\}$, where the choice $\phi = 0.9$ is motivated by the estimate $\hat\phi = 0.895$ obtained in the Gambia case study (Section 5).
We choose spatial locations uniformly at random from unit square domain $\scrS = [0,1]^2$ to decide training and test locations with sizes  $(n_{\mathrm{train}},n_{\mathrm{test}})=(200,50)$, where $50$ test locations are held out for assessing predictive performance. 
We use Mat\'ern correlation kernel $\calR_M(s,s') = (1+\|s-s'\|_2/\rho)\exp(-\|s-s'\|_2/\rho)$ with known smoothness $1.5$ and unknown range parameter $\rho$, where true $\rho$ are set as $\rho = 0.05$ or $\rho=0.1$. The same correlation kernel is applied to the latent Gaussian process $\zeta(\cdot)$ in the Gaussian copula. 
We set $N_i = 10$ for all locations and set $p=2$ including the intercept, where the non-intercept covariate $\{x_{ij}\}$ is generated from an independent standard normal distribution. The true fixed-effect coefficient is set as $\bm\beta = (\beta_0,\beta_1)^\T = (0,1)^\T$. This data generation process is repeated 200 times.

We describe details of prior specifications. Following \citet{Gelman2008-hm}, we choose Cauchy priors for $\bm\beta$, specifically location 0 scale 10 for $\beta_0$ and location 0 scale 1.25 for $\beta_1$. For the prior of spatial range $\rho$ in Mat\'ern correlation kernel with known smoothness 1.5, we assign a uniform prior $\rho\sim \mathrm{Unif}(0.001,0.3)$ for both bridge and Gaussian process random effect models, which corresponds to an effective range approximately between 0.00475 and 1.425 in unit square domain. For fully Bayesian approaches with prior on the random effect standard deviation $\sigma_u$, we choose a weakly informative half-Cauchy prior \citep{Gelman2006-de}, namely $p(\sigma_u) = 2/\{\pi(1+\sigma_u^2)\}$ for Gaussian process random effects. For the bridge process models where $\sigma_u = 3^{-1/2}\pi(\phi^{-2}-1)^{1/2}$, this becomes $p(\phi) = (12)^{1/2}/[\{\pi^2 - (\pi^2-3)\phi^2\}(1-\phi^2)^{1/2}]$ for $\phi\in(0,1)$, corresponding to prior mean approximately 0.753. 

Next, we describe inference settings. For the spatial generalized estimating equation method of \citet{Cattelan2018-pp} with the same notations therein, we used the number of bins $B = 13$, radius $h=0.05$, and $d_{\mathrm{max}} = 0.3$ to obtain the empirical spatial lorelogram, and optimize the parameters $\alpha_2$ (sill) and $\alpha_3$ (range) of Mat\'ern kernel with smoothness 1.5 without nugget. For the bridge and Gaussian process random effect models, we run 11,000 iterations, with the first 1,000 samples discarded as burn-in and 1,000 samples are saved with 10 thin-in rates. The algorithms are all implemented in \texttt{R}, and the running time of Markov chain Monte Carlo is about 38 mins for the bridge process random effect model and about 20 mins for the Gaussian process random effect model under the Intel(R) Xeon(R) Gold 6336Y 2.40GHz CPU environment.

\begin{table}

\footnotesize
\caption{Comparison of estimated population-averaged effects based on 200 simulations in terms of bias, root mean squared error (RMSE), average length of 95\% confidence or credible interval (CI$_{.95}$), and coverage probabilities at nominal level 0.95 (Cover).}
    \label{table:simmar}
    \centering
    \begin{tabular}{cc c c c c c c c c}
    \toprule
     \multicolumn{2}{c}{Population-averaged ($\hat\beta_1^\textsc{m}$)} & \multicolumn{4}{c}{Data from bridge process} & \multicolumn{4}{c}{Data from Gaussian copula}\\
      \cmidrule(lr){3-6} \cmidrule(lr){7-10}
     \cmidrule(lr){3-4}\cmidrule(lr){5-6} \cmidrule(lr){7-8}\cmidrule(lr){9-10}
     Setting & Method & Bias & RMSE & CI$_{.95}$ & Cover & Bias & RMSE & CI$_{.95}$ & Cover\\ 
     \midrule
    \multirow{3}{*}{\makecell{ $\phi = 0.7$,\\ $\rho = 0.05$ }} &  SpGEE & 0.015 & 0.176 & 0.233 & 40.0\% & 0.008 & 0.063 & 0.241 & 94.5\% \\
    & BrP & 0.008 & 0.174 & 0.189 & 33.5\% & 0.005 & 0.064 & 0.190 & 86.5\%  \\
    & BrP-FB & 0.005 & 0.128 & 0.532 & 96.5\% & 0.006 & 0.051 & 0.568 & 100.0\%  \\
    \midrule
    \multirow{3}{*}{\makecell{ $\phi = 0.7$,\\ $\rho = 0.1$ }}& SpGEE & 0.033 & 0.184 & 0.275 & 48.5\% & 0.026 & 0.084 & 0.293 & 92.0\% \\
        & BrP & 0.030 & 0.182 & 0.193 & 31.0\% & 0.021 & 0.084 & 0.192 & 75.5\%  \\
    & BrP-FB & -0.009 & 0.140 & 0.557 & 95.0\% & -0.021 & 0.069 & 0.608 & 100.0\%  \\
         \midrule
 \multirow{3}{*}{\makecell{ $\phi = 0.9$,\\ $\rho = 0.05$ }}& SpGEE & -0.033 & 0.151 & 0.230 & 70.5\% & 0.002 & 0.058 & 0.228 & 95.0\% \\
    & BrP & -0.038 & 0.151 & 0.212 & 69.5\% & -0.003 & 0.057 & 0.218 & 94.0\%  \\
    & BrP-FB & -0.087 & 0.139 & 0.429 & 95.5\% & -0.099 & 0.114 & 0.416 & 95.0\% \\
    \midrule
 \multirow{3}{*}{\makecell{ $\phi = 0.9$,\\ $\rho = 0.1$ }} &  SpGEE & -0.030 & 0.148 & 0.252 & 74.0\% & 0.010 & 0.069 & 0.244 & 90.5\%\\
    & BrP & -0.036 & 0.151 & 0.212 & 69.0\% & 0.005 & 0.068 & 0.219 & 88.0\%  \\
    & BrP-FB & -0.105 & 0.156 & 0.464 & 94.0\% & -0.117 & 0.137 & 0.470 & 94.5\% \\
    \bottomrule
    \end{tabular}
    \vspace{-3mm}
\begin{flushleft} 
{\footnotesize SpGEE, spatial generalized estimating equation method of \citet{Cattelan2018-pp}; BrP, proposed bridge process model; BrP-FB, bridge process model with fully Bayesian approach on $\phi$.}
\end{flushleft}
\end{table}

\begin{table}
\footnotesize
\caption{Comparison of estimated site-specific effects based on 200 simulations in terms of bias, root mean squared error (RMSE), average length of 95\% confidence or credible interval (CI$_{.95}$), and coverage probabilities at nominal level 0.95 (Cover).}
    \label{table:simcond}
    \centering
    \begin{tabular}{cc c c c c c c c c}
    \toprule
       \multicolumn{2}{c}{Site-specific ($\hat\beta_1$)}  & \multicolumn{4}{c}{Data from bridge process} & \multicolumn{4}{c}{Data from Gaussian copula}\\
      \cmidrule(lr){3-6} \cmidrule(lr){7-10}
     \cmidrule(lr){3-4}\cmidrule(lr){5-6} \cmidrule(lr){7-8}\cmidrule(lr){9-10}
      Setting & Method & Bias & RMSE & CI$_{.95}$ & Cover & Bias & RMSE & CI$_{.95}$ & Cover\\ 
     \midrule
    \multirow{3}{*}{\makecell{ $\phi = 0.7$,\\ $\rho = 0.05$ }} &  GP & 0.002 & 0.068 & 0.275 & 95.5\% & -0.001 & 0.063 & 0.270 & 98.5\% \\
    & BrP  & 0.003 & 0.068 & 0.275 & 96.5\% & -0.001& 0.063 & 0.270 & 98.5\%  \\
    & BrP-FB & 0.004 & 0.068 & 0.281 & 96.5\% & 0.001 & 0.063 & 0.275 & 98.5\%  \\
    \midrule
 \multirow{3}{*}{\makecell{ $\phi = 0.7$,\\ $\rho = 0.1$ }} & GP & -0.001 & 0.074 & 0.271 & 93.5\% & -0.002 & 0.065 & 0.267 & 97.0\% \\
        & BrP & -0.002 & 0.074 & 0.271 & 94.0\% & -0.001 & 0.066 & 0.266 & 97.5\%  \\
    & BrP-FB & 0.001 & 0.074 & 0.279 & 94.5\% &  -0.002 & 0.065 & 0.276 & 97.0\% \\
         \midrule
 \multirow{3}{*}{\makecell{ $\phi = 0.9$,\\ $\rho = 0.05$ }} & GP & 0.001 & 0.064 & 0.251 & 96.0\% & -0.003 & 0.060 & 0.243 & 95.0\%\\
    & BrP  & 0.000 & 0.063 & 0.250 & 95.5\% & -0.005 & 0.061 & 0.242 & 95.5\%  \\
    & BrP (FB) & 0.001 & 0.063 & 0.252 & 96.0\% & -0.003 & 0.060 & 0.245 & 96.0\% \\
    \midrule
 \multirow{3}{*}{\makecell{ $\phi = 0.9$,\\ $\rho = 0.1$ }} &  GP & 0.000 & 0.066 & 0.249 & 94.5\%& -0.001 & 0.059 & 0.241 & 96.0\% \\
    & BrP  & -0.001 & 0.066 & 0.248 & 94.5\% & -0.003 & 0.059 & 0.241 & 95.0\%  \\
    & BrP-FB & 0.001 & 0.066 & 0.254 & 94.5\% & 0.000 & 0.059 & 0.246 & 96.5\% \\
    \bottomrule
    \end{tabular}
    \vspace{-3mm}
\begin{flushleft} 
{\footnotesize GP, Gaussian process random effect model; BrP, proposed bridge process model; BrP-FB, bridge process model with fully Bayesian approach on $\phi$.}
\end{flushleft}
\end{table}

The simulation results are summarized in terms of the estimated population-averaged effects $\hat\beta_1^\textsc{m}$ in Table~\ref{table:simmar} and the site-specific effects $\hat\beta_1$ in Table~\ref{table:simcond}. For summaries of the intercept terms, see Supplementary Section~\ref{appendix:simdetails}. We first remark that when the data are generated from bridge processes, the root mean squared error (RMSE) of $\hat\beta_1^\textsc{m}$ in Table~\ref{table:simmar} is substantially larger than that under the Gaussian copula, whereas the RMSEs of the conditional effect $\hat\beta_1$ in Table~\ref{table:simcond} are comparable across the two data generating mechanisms. This suggests that there is a large variability of population-averaged effect $\hat\beta_1^\textsc{m}$ under the bridge process data-generating scheme, compared to the Gaussian copula data-generating scheme.

In terms of the estimated population-averaged effects $\hat\beta_1^\textsc{m}$, the SpGEE and BrP produce generally similar results, whereas BrP-FB shows substantial differences. Notably, BrP-FB exhibits substantial bias when $\phi = 0.9$ for both data generation scenarios, and when $\phi = 0.7$, the bias becomes confounded with the influence of the prior $p(\phi)$. In contrast, both SpGEE and BrP show relatively robust results across different $(\phi, \rho)$ and data settings, highlighting the advantage of the empirical Bayes approach to find $\hat\phi$ based on composite likelihood. Regarding confidence and credible interval, BrP-FB produces much wider intervals than both SpGEE and BrP. Although BrP-FB yields the desired coverage when the data are generated from a bridge process random effects, it also results in 100\% coverage under Gaussian copula data, suggesting that uncertainty quantification with BrP-FB is highly sensitive to model misspecification. 

For the estimated site-specific effects $\hat\beta_1$, all methods GP, BrP, and BrP-FB lead to highly similar results. While there are slight overcoverages under the Gaussian copula data generation setting with $\phi = 0.7$, generally there are almost no biases and all three methods achieve the desired coverage level. Table~\ref{table:1pred} summarizes predictive performance based on held-out data of size $n_{\mathrm{test}} = 50$, $N_{\mathrm{test}} = 500$. Across three methods GP, BrP, and BrP-FB, we compare test log-likelihood averaged over $N_{\mathrm{test}}=500$ binary data and area under the receiver operating characteristic curve based on test data (test AUC) using the posterior predictive mean. The results suggest that the GP, BrP, and BrP-FB models produce virtually identical predictions. In summary, the results demonstrate that BrP with an empirical Bayes estimate of $\phi$ provides a marginal effect estimate comparable to SpGEE, and at the same time yield site-specific estimates and predictions that are virtually identical to the Gaussian process random effects model.

\color{black}

\begin{table}
\footnotesize
\caption{
Comparison of predictive performance (both higher the better) across three methods based on held-out binary data, averaged across 200 replicates.}
    \label{table:1pred}
\centering
\begin{tabular}{cc c c c c}
\toprule
& & \multicolumn{2}{c}{Data from bridge process} & \multicolumn{2}{c}{Data from Gaussian copula} \\
\cmidrule(lr){3-4}\cmidrule(lr){5-6}
Setting & Method & Test loglik $\times 10^2$ & Test AUC $\times 10^2$ & Test loglik $\times 10^2$ & Test AUC $\times 10^2$ \\
\midrule

\multirow{3}{*}{\makecell{$\phi = 0.7$,\\ $\rho = 0.05$}}
  & GP        & -53.069  & 80.137 & -52.572 & 80.861 \\
  & BrP       & -53.076 & 80.129 & -52.571 & 80.863 \\
  & BrP-FB  & -53.073 & 80.130 & -52.564 & 80.866 \\
\midrule
\multirow{3}{*}{\makecell{$\phi = 0.7$,\\ $\rho = 0.1$}}
  & GP        & -49.963  & 81.932  & -49.103 & 83.037 \\
  & BrP       & -49.964 & 81.929 & -49.108 &  83.034 \\
  & BrP-FB  & -49.963 & 81.931  & -49.108 & 83.033 \\
\midrule
\multirow{3}{*}{\makecell{$\phi = 0.9$,\\ $\rho = 0.05$}}
  & GP        & -57.410  &  76.437 & -58.680 & 75.339 \\
  & BrP       & -57.415  & 76.433 & -58.680 & 75.339  \\
  & BrP-FB  & -57.408 & 76.440 & -58.671 & 75.351  \\
\midrule
\multirow{3}{*}{\makecell{$\phi = 0.9$,\\ $\rho = 0.1$}}
  & GP        & -55.445 & 77.664 & -57.224 & 76.567 \\
  & BrP       & -55.445 & 77.667 & -57.229 & 76.563 \\
  & BrP-FB  & -55.446 & 77.663 & -57.223 & 76.566 \\
\bottomrule
\end{tabular}
    \vspace{-3mm}
\begin{flushleft} 
{\footnotesize GP, Gaussian process random effect model; BrP, proposed bridge process model; BrP-FB, bridge process model with fully Bayesian approach on $\phi$. The Monte Carlo standard errors of Test loglik $\times 10^2$ and Test AUC $\times 10^2$ are all less than 0.5 for $\rho=0.05$ and all less than 0.7 for $\rho = 0.1$. 
}
\end{flushleft}
\end{table}

\subsection{Scalability analysis with low-rank dependence structure}

Next, we conduct another simulation study to analyze the benefits of scalable computing strategies described in Section~\ref{subsec:scalable} when the number of spatial locations $n$ becomes large. Based on the proposed bridge process random effects model, we compare how the low-rank structure of the correlation kernel affects the prediction and sampling efficiency of the parameters. 

We generate data based on the bridge process random effects model \eqref{eq:proposedmodel}. We choose spatial locations uniformly at random from the square domain $\scrS = [0,2]^2$ to decide training and test locations with sizes $(n_{\mathrm{train}},n_{\mathrm{test}})\in\{(200,50),(800,200)\}$. Similarly to previous simulation settings, we set $\phi = 0.7$ and use the Mat\'ern correlation kernel $\calR_M(s,s') = (1+\|s-s'\|_2/\rho)\exp(-\|s-s'\|_2/\rho)$ with known smoothness $1.5$ and unknown range parameter $\rho$, where the true $\rho$ are set as $\rho = 0.05$ or $\rho=0.1$. We set $N_i = 10$ for all locations, set $p=2$ including the intercept, with $x_{ij}\iidsim N(0,1)$. The true fixed effect coefficient is set as $\beta = (\beta_0,\beta_1)^\T = (0,1)^\T$. Denoting $N = \sum_{i=1}^n N_i$, we have training data sizes $(n,N) \in\{(200,2000), (800,8000)\}$ with two different ranges $\rho\in\{0.05, 0.1\}$, and this data generation process is repeated 200 times. 

We consider BrP (empirical Bayes) and BrP-FB with 3 different correlation kernels, one with a full-rank Matern kernel $\calR_M(s,s') = (1+\|s-s'\|_2/\rho)\exp(-\|s-s'\|_2/\rho)$ and two with a low-rank kernel \eqref{eq:corr_lowrank} with $q=49$ and $q=100$. Within the domain $[0,2]^2$, the knot locations $\{\tilde{s}_k\}_{k=1}^q$ are selected as $\{0.1, 0.4, \dots,1.6, 1.9\}^2$ for $q=49$ and $\{0.1, 0.3, \dots,1.7, 1.9\}^2$ for $q=100$, respectively. The range parameter is assumed to be unknown and we adopt the same prior specification as in Section~\ref{subsec:sim1}. For the Markov chain Monte Carlo, we run 6,000 iterations, with the first 1,000 samples discarded as burn-in and 5,000 samples are saved without thinning.

\begin{figure}
    \centering
    \includegraphics[width=\linewidth]{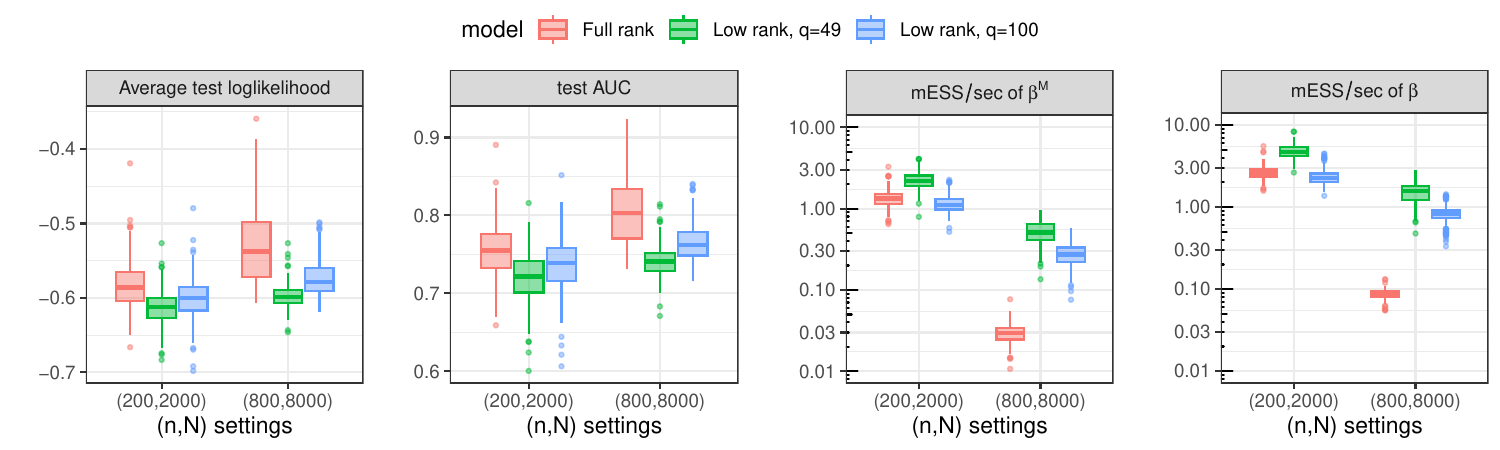}
    \caption{Boxplot summaries of scalability analysis simulation result for $\rho = 0.05$ (moderate spatial dependence) based on 200 replicated datasets. The sampling efficiency is displayed in log scale.}
    \label{fig:boxplot005}
\end{figure}

\begin{figure}
    \centering
    \includegraphics[width=\linewidth]{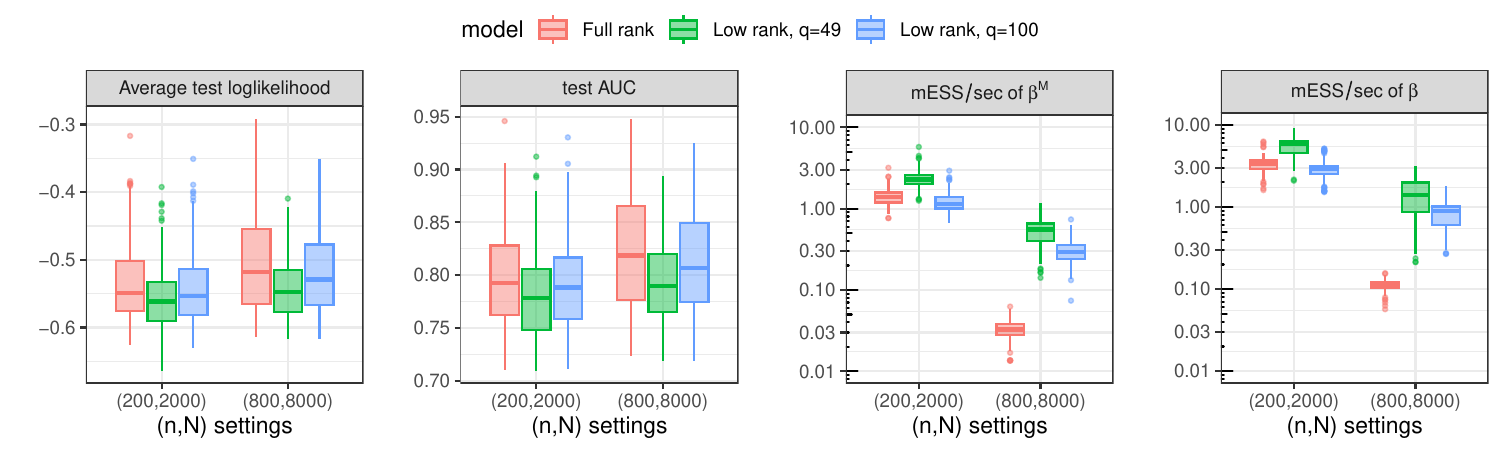}
    \caption{Boxplot summaries of scalability analysis simulation result for $\rho = 0.1$ (strong spatial dependence) based on 200 replicated datasets. The sampling efficiency is displayed in log scale.}
    \label{fig:boxplot01}
\end{figure}

The simulation results for BrP-FB are summarized in Figure~\ref{fig:boxplot005} and Figure~\ref{fig:boxplot01}. 
In terms of predictive performance based on average test log-likelihood and test AUC, the full-rank correlation kernel performs better than the low-rank ones for $\rho = 0.05$ but similarly for $\rho=0.1$ and $q=100$. This is an expected result, since low-rank methods typically yield predictions that smooth over small-scale patterns \citep{Datta2016-dx}, thus the difference is more emphasized when $\rho = 0.05$. We also compare the sampling efficiency of $\bmbetam$ and $\bm\beta$ in terms of multivariate effective sample size \citep{Vats2019-wo} divided by running time (mESS/sec). As the number of spatial locations increases from $n=200$ to $n=800$, the mESS/sec between full-rank and low-rank differs in an order of magnitude for both $\bmbetam$ and $\bm\beta$, clearly showing the computational benefits of the low-rank correlation kernel. The difference is mainly attributed to the running time difference, as mESS values not standardized by time are similar. As Table~\ref{table:1pred} suggests, the predictive performance under BrP is virtually identical to that under BrP-FB, and the sampling efficiency results for BrP (both $\bmbetam$ and $\bm\beta$) are highly similar to those for $\bm\beta$ under BrP-FB, and are omitted for better clarity.

\section{Childhood malaria prevalence in Gambia}

We illustrate the proposed model by analyzing malaria data among children in Gambia \citep{Thomson1999-ic}, The dataset, available in \texttt{R} package \texttt{geoR} \citep{Diggle1998-xw}, contains $\sum_{i=1}^n N_i = 2035$ children's malaria infection statuses from $n=65$ villages, along with covariates including age, bed net use, net treatment with insecticides, a satellite-derived measure of greenness, and the presence of a health center in the villages. The study of population-averaged associations between malaria prevalence and covariates is useful for nationwide decision-making; for example, on supplying bed nets. 
The prediction of malaria prevalence in new locations is also of substantial interest. 
The previous literature suggests residual spatial dependence in these data \citep{Diggle2002-wc,Bai2014-zc,Cattelan2018-pp}. In Supplementary Section~\ref{appendix:gambiadetails}, we provide a more detailed investigation of residual spatial dependence.

We choose the same set of variables as \citet{Cattelan2018-pp}, reproducing their results based on spatial generalized estimating equations, and comparing with bridge process (empirical Bayes for $\phi$) and Gaussian process random effect logistic models with an exponential correlation kernel. 
We use the same priors as in the simulation study \citep{Gelman2006-de,Gelman2008-hm} except for the range parameter $\rho$. 
For the prior on the spatial range $\rho$ in the exponential correlation kernel $\calR_E(s,s') = \exp(\|s-s'\|_2/\rho)$, we assign a uniform prior $\rho\sim \mathrm{Unif}(0.01,100)$ for the bridge and Gaussian process random effects models. We run three Markov chains with 11,000 iterations, where the first 1,000 samples are discarded as burn-in and 10,000 samples are saved without thinning. The wall-clock running time is about 2 mins for the bridge process random effect model with empirical Bayes approach, including the time for finding $\hat\phi$ which took approximately 1 min, and about 1.5 mins for the fully Bayesian Gaussian process random effect model under the Apple M1 3.20GHz CPU environment.

\begin{table}
\footnotesize
    \caption{Comparison of population-averaged and site-specific estimates across three methods for the Gambia malaria data. Parenthesis corresponds to 95\% confidence (credible) intervals.}
    \label{table:2}
    \centering
    \begin{tabular}{l rc rc rc rc}
    \toprule
      & \multicolumn{4}{c}{Population-average estimate ($\hat\beta_j^{\textsc{m}}$)}  & \multicolumn{4}{c}{Site-specific estimate ($\hat\beta_j$)}\\
     \cmidrule(lr){2-5}\cmidrule(lr){6-9}
    Variable & \multicolumn{2}{c}{Spatial GEE} & \multicolumn{2}{c}{Bridge process RE} & \multicolumn{2}{c}{Bridge process RE} & \multicolumn{2}{c}{Gaussian process RE}\\
     \midrule
Intercept & 6.95 & \footnotesize{(0.39, 13.52)} & 2.09 & \footnotesize{(-2.74, 6.93)} & 2.34 & \footnotesize{(-3.06, 7.74)}  & 2.43 & \footnotesize{(-3.07, 7.84)} \\
Age (years) & 0.22 & \footnotesize{(0.14, 0.30)} & 0.22 & \footnotesize{(0.14, 0.30)} & 0.24 & \footnotesize{(0.16, 0.33)} & 0.24 & \footnotesize{(0.16, 0.33)} \\
Net-use  & -0.39 & \footnotesize{(-0.66, -0.11)} & -0.33 & \footnotesize{(-0.61, -0.05)} & -0.37 & \footnotesize{(-0.68, -0.06)}  & -0.36 & \footnotesize{(-0.67, -0.06)} \\
Treated net &  -0.32 & \footnotesize{(-0.65, 0.01)} & -0.32 & \footnotesize{(-0.67, 0.02)} & -0.36 & \footnotesize{(-0.75, 0.02)}  & -0.36 & \footnotesize{(-0.75, 0.03)} \\
Green & -0.36 & \footnotesize{(-0.63, -0.08)} & -0.12 & \footnotesize{(-0.30, 0.07)} & -0.13 & \footnotesize{(-0.34, 0.07)} &  -0.13 & \footnotesize{(-0.34, 0.08)} \\
Green$^2 \times 10^2$ & 0.40 & \footnotesize{(0.11, 0.69)}  & 0.13 & \footnotesize{(-0.06, 0.32)} & 0.15 & \footnotesize{(-0.07, 0.36)} & 0.14 & \footnotesize{(-0.08, 0.36)} \\
Health center & -0.26 & \footnotesize{(-0.60, 0.08)} & -0.26 & \footnotesize{(-0.63, 0.11)} & -0.29 & \footnotesize{(-0.71, 0.12)} &  -0.30 & \footnotesize{(-0.71, 0.12)} \\
    \bottomrule
    \end{tabular}
    \vspace{-3mm}
\begin{flushleft} 
{\footnotesize 
    Spatial GEE, spatial generalized estimating equation method of \citet{Cattelan2018-pp}; RE, random effects}
\end{flushleft}
\end{table}

The results are summarized in Table~\ref{table:2}. Trace plots of the bridge process and Gaussian process random effect models are shown in Supplementary Figure~\ref{fig:mcmc_bridge}, both showing excellent convergence. The site-specific estimates using the bridge and Gaussian process random effect models give almost identical results. From the bridge process model, the empirical Bayes estimate of the attenuation factor is $\hat\phi = 0.895$. 
The spatial generalized estimating equation and the proposed method also yield similar population-averaged estimates, with slight differences in magnitude for some variables. For example, using bed net is associated with approximately $32.3\%$ reduction in the population-averaged odds of malaria infection under the Spatial GEE ($e^{-0.39}=0.677$), and a 28.1\% reduction in the population-averaged odds of malaria infection under the proposed bridge process random effects model ($e^{-0.33}=0.719$), holding other variables constant.

In addition to the estimated population-averaged and site-specific coefficients, we have a posterior mean $\hat\rho = 33.3 
 (11.2, 32.5)$ and posterior mode $\hat\rho^{\mathrm{mode}} = 10.9$ for the bridge process, and posterior mean $\hat\rho = 29.9 
 (13.4, 40.1)$ and posterior mode $\hat\rho^{\mathrm{mode}} = 13.5$ for the Gaussian process random effect models, where parentheses correspond to the posterior interquartile range. 
We assess predictive performance using the widely applicable information criterion (WAIC) \citep{Gelman2014-ga} and leave-one-out cross-validation estimate using Pareto smoothed importance sampling (PSIS-LOO) \citep{Vehtari2024-qa}, both conditional on
random effects. The WAIC is 2326.4 (standard error 39.9) for the bridge process random effect model and 2326.3 (standard error 40.0) for the Gaussian process random effect model, and the PSIS-LOO is 2326.6 (standard error 39.9) and 2326.4 (standard error 40.0) for the Gaussian process random effect model. These suggest that predictive performances are virtually identical between the proposed model and the Gaussian process random effect model, and the bridge process model remains appealing due to its dual interpretability. 

\section{Discussion}

Although we have focused on spatial settings, the proposed bridge process can also be applicable to modeling longitudinal binary data with time-correlated random effects \citep{Parzen2011-bi}. For example, for discrete-time indices $t$ and $t'$, one can employ an AR(1)-type correlation kernel $\calR(t,t') = \varrho^{|t-t'|}$ for some $\varrho\in(-1,1)$. For such a case, the corresponding posterior inference algorithm can be appropriately adjusted to accommodate multiple realizations of bridge processes with a common parameter $(\phi,\varrho)$. 
Furthermore, we anticipate that recent advances in fast approximate Bayesian methods for non-Gaussian latent models \citep{Cabral2024-jj} could also be integrated with the bridge process, leveraging the mixture representation in Theorem~\ref{thm:unimixture}.

\section*{Acknowledgement}
This research was partially supported by the National Institutes of Health (grant ID R01ES035625), by the European Research Council under the European Union’s Horizon 2020 research and innovation programme (grant agreement No 856506), by the National Science Foundation (NSF IIS-2426762), and by the Office of Naval Research (N00014-24-1-2626).
The authors thank anonymous reviewers, Amy Herring, and Georgia Papadogeorgou for helpful suggestions and discussions.

\section*{Software}

Code to reproduce the analyses is available at \url{https://github.com/changwoo-lee/spbridge}.

\bibliography{ref.bib}
\bibliographystyle{apalike}
\newpage

\begin{center}
\LARGE{Appendices}
\end{center}

\setcounter{section}{0}
\setcounter{equation}{0}
\setcounter{table}{0}
\setcounter{proposition}{0}
\renewcommand{\theequation}{A.\arabic{equation}}
\renewcommand{\thesection}{A.\arabic{section}} 
\renewcommand{\theproposition}{A.\arabic{proposition}} 
\renewcommand{\thetheorem}{A.\arabic{theorem}} 
\renewcommand{\thecorollary}{A.\arabic{corollary}} 
\renewcommand{\thelemma}{A.\arabic{lemma}} 
\renewcommand{\thetable}{A.\arabic{table}} 
\renewcommand{\thefigure}{A.\arabic{figure}}

Section A.1 contains proofs of the statements presented in the main article. Section A.2 describes the posterior inference algorithm and associated computational strategies. Section A.3 contains details of the multivariate bridge distribution, including a mixture representation.
Section A.4 includes an additional figure corresponding to the analysis of the Gambia malaria data.

\section{Proofs}
\label{appendix:proof}

\begin{proof}[Proof of Proposition~\ref{prop:varsumy}] Let $\eta_i = \bfx_i^\T\bm\beta$.  Denote the marginal mean be $E(Y_{ij}\mid \bfx_i) = \int_{-\infty}^\infty \logit^{-1}(\eta_i + u_i)  \rmd u_i=  \logit^{-1}(\phi\eta_i)= p_i$. We first show $\phi = E(\var(Y_{ij}\mid \bfx_i, u_i ))/\var(Y_{ij}\mid \bfx_i)$. We have $E(\var(Y_{ij}\mid \bfx_i, u_i )) = E[\{\logit^{-1}(\eta_i + u_i)\{ 1-\logit^{-1}(\eta_i + u_i)\}] = p_i - \int_{-\infty}^\infty \frac{(e^{u_i+\eta_i})^2}{(1+e^{u_i+\eta_i})^2} \pibr(u_i;\phi) \rmd u_i$, and the second term is 
\begin{align*}
\int_{-\infty}^\infty \frac{(e^{u_i+\eta_i})^2}{(1+e^{u_i+\eta_i})^2} \pibr(u_i;\phi) \rmd u_i &= \int_{-\infty}^\infty \frac{e^{u_i+\eta_i}(1+e^{u_i+\eta_i})-e^{u_i+\eta_i}}{(1+e^{u_i+\eta_i})^2} \pibr(u_i;\phi) \rmd u_i\\
&= \frac{e^{\phi \eta_i}}{1+e^{\phi\eta_i}} - \frac{\phi e^{\phi\eta_i}}{(1+e^{\phi\eta_i})^2}\\
&= p_i - \phi p_i (1-p_i)
\end{align*}
where in the second equality, we used 
\[
\int_{-\infty}^\infty \frac{e^{u+\eta}}{1+e^{u+\eta}} \pibr(u;\phi) \rmd u = \frac{e^{\phi \eta}}{1+e^{\phi\eta}}, \quad \int_{-\infty}^\infty \frac{e^{u+\eta}}{(1+e^{u+\eta})^2} \pibr(u;\phi) \rmd u = \frac{\phi e^{\phi\eta}}{(1+e^{\phi\eta})^2}
\]
which follows from \cite{Wang2003-ih} equations (2.1) and (2.2). Therefore, $E(\var(Y_{ij}\mid \bfx_i, u_i )) = \phi p_i (1-p_i) = \phi \var(Y_{ij}\mid \bfx_i)$. Moreover, 
\begin{align*}
\textstyle{
\var\left(\sum_{j=1}^{n_i} Y_{ij} \mid \bfx_i \right)} &= \textstyle{\var\left(E\left(\sum_{j=1}^{n_i} Y_{ij} \mid \bfx_i, u_i\right) \right) +  E\left(\var\left(\sum_{j=1}^{n_i} Y_{ij} \mid \bfx_i, u_i\right) \right)}\\
&= \var\left(n_iE\left( Y_{ij} \mid \bfx_i, u_i\right) \right) +  E\left(n_i\var\left( Y_{ij} \mid \bfx_i, u_i\right) \right)\\
&= n_i^2\var\left(E\left( Y_{ij} \mid \bfx_i, u_i\right) \right) +  n_iE\left(\var\left( Y_{ij} \mid \bfx_i, u_i\right) \right)\\
&= n_i^2 (1-\phi) \var(Y_{ij}\mid\bfx_i) +  n_i \phi \var(Y_{ij}\mid\bfx_i)\\
&= n_i \var(Y_{ij}\mid \bfx_i)\{1 + (n_i -1)(1-\phi) \}.
\end{align*}
\end{proof}

\begin{proof}[Proof of Theorem~\ref{thm:unimixture}]
The existence of a mixing distribution $\pim(\lambda; \phi)$ was shown in Proposition 1 of \citet{Wang2003-ih}, but its form was not identified.  
The characteristic function of the bridge distribution with parameter $\phi$ is $E(e^{itu}) = \sinh(\pi t)/\{\phi \sinh(\pi t / \phi)\}$ \citep{Wang2003-ih}. 
Following a similar argument as \citet{West1987-gt}, we have
\begin{align*}
E(e^{itu}) = \frac{\sinh(\pi t)}{\phi \sinh(\pi t / \phi)} &= \int_{-\infty}^\infty \int_0^\infty \exp(-itu)\mathrm{N}_1(u; 0, \lambda)\pim(\lambda; \phi) \rmd \lambda  \rmd u\\
& = \int_0^\infty \exp(-t^2\lambda /2) \pim(\lambda; \phi) \rmd \lambda =E\{\exp(-t^2\lambda/2)\}
\end{align*} 
Plugging in $t = (2s)^{1/2}$, which yields a Laplace transformation of $\lambda$,
\[
E\{\exp(-s\lambda)\} = \frac{\sinh\{\pi (2s)^{1/2}\}}{\phi \sinh\{\pi (2s)^{1/2} / \phi\}} = \prod_{k=1}^{\infty} \left\{\frac{1+2s/k^2}{1+2s/(\phi^2k^2)}\right\}, \quad s>0,
\]
where the last equation follows from the Weierstrass factorization theorem. Let $Z_k = A_kB_k$, which is a mixture of the standard exponential distribution with weight $1-\phi^2$ and a point mass at 0 with weight $\phi^2$. Recognizing that the Laplace transformation of $Z_k$ is $\phi^2 + (1-\phi^2)(1+s)^{-1} = (1+\phi^2s)/(1+s)$, the proof is completed by scaling $Z_k$ by $2/(\phi^2k^2)$ for $k\in \bbN$, respectively, and using the Laplace transformation convolution theorem.
\end{proof}

\begin{proof}[Proof of Proposition~\ref{prop:corr}]
Let $u = (u_1,\dots,u_n)$ be a finite realization from the bridge process where $\bfR$ is a matrix with $(i,j)$th element $\calR(s_i,s_j)$. By the law of total covariance, for any $i,j\in\{1,\dots,n\}$,
\begin{align*}
\cov(u_i,u_j) &= E\{\cov(u_i,u_j \mid \lambda)\} + \cov\{E(u_i \mid \lambda), E(u_j \mid \lambda)\} = E(\lambda )R_{ij}\\
& = \sum_{k=1}^\infty\frac{2(1-\phi^2)}{\phi^2 k^2}R_{ij} = \frac{\pi^2}{3}\left(\frac{1}{\phi^2}-1\right)R_{ij}
\end{align*}
which yields $\corr(u_i,u_j) = R_{ij}$ since $\var(u_i) = \pi^2/3(\phi^{-2}-1)$ for all $i=1,\dots,n$. 
\end{proof}

\begin{proof}[Proof of Proposition~\ref{prop:characterization}]
    \citet{Kano1994-zi} showed that the family of elliptical distributions is Kolmogorov consistent if and only if it admits a scale mixture of normals representation, where the scale mixing distribution does not depend on the dimensionality. The proof is completed by the fact that the bridge-distributed marginal distribution uniquely determines a mixing distribution $\pim(\phi)$ that does not depend on the dimension, which is from the uniqueness of Laplace transformation.
\end{proof}

\begin{proof}[Proof of Proposition~\ref{prop:latentrep}] 
Although the latent variable representation is well known in the literature \citep{Holmes2006-np}, we reproduce the proof for completeness.
Let $\sigma(x) = \logit^{-1}(x) = 1/(1+e^{-x})$ and $\sigma'(x)$ be its derivative. Starting from conditional probability
\begin{align*}
\pr\{Y_{ij} = y_{ij}\mid x_{ij}, u(s_i)\} &=  \sigma\{x_{ij}^\T\beta + u(s_i)\}^{y_{ij}}[1-\sigma\{x_{ij}^\T\beta + u(s_i)\}]^{1-y_{ij}},
\end{align*}
let $Z_{ij}$ be independent logistic random variables with location $x_{ij}^\T\beta+u(s_i)$ and unit scale. Hence, $Z_{ij} = x_{ij}^\T \beta + u(s_i) + \epsilon_{ij}$, where $\epsilon_{ij}$ independently follows the standard logistic distribution. Then, since $\sigma$ is the cumulative distribution function of the standard logistic distribution, we have
\begin{align*}
    p\{Y_{ij}, Z_{ij}\mid x_{ij}, u(s_i)\} &= \{1(Z_{ij} >0)1(Y_{ij} = 1) + 1(Z_{ij} \le 0)1(Y_{ij} = 0)\}
\sigma'(\epsilon_{ij}) 
\end{align*}
which explains the latent variable representation. 

Now we show that $u(s_i) + \epsilon_{ij}$ follows a logistic distribution with scale $1/\phi$.  The product of characteristic functions of standard logistic $E(e^{it\epsilon})$ and bridge distribution $E(e^{itu})$ becomes
\[
\frac{\pi t}{\sinh(\pi t)}  \frac{\sinh(\pi t)}{\phi \sinh(\pi t / \phi)} = \frac{\pi t / \phi}{\sinh(\pi t/ \phi)}
\]
Since the right-hand side is the characteristic function of the logistic distribution with scale $1/\phi$, by the convolution theorem, this completes the proof. 
\end{proof}

\section{Details of posterior inference algorithm}
\label{appendix:post}
We first introduce the notation. Write $N = \sum_{i=1}^n N_i$, let $\bfX_i = [\bfx_{i1}^\T,\dots,\bfx_{iN_i}^\T]^\T$ be a $N_i\times p$ matrix of predictors at the $i$th location, and $\bfX = [\bfX_{1}^\T,\dots,\bfX_{n}^\T]^\T$ be a $N\times p$ fixed effects design matrix. Also, let $\bfZ = \mathrm{blockdiag}(\bm{1}_{N_1},\dots,\bm{1}_{N_n})$ be a $N\times n$ design matrix for random effects, so that the linear predictor vector becomes $\bfX\bm\beta + \bfZ\bfu_{1:n}$. 
Similarly, let $\bfy_i = (y_{i1},\dots,y_{iN_i})$ be binary responses at  location $i$ and $\bfy = (\bfy_1^\T,\dots,\bfy_n^\T)^\T$ be a response vector of length $N$. Let $\bm\Omega_i = \diag(\omega_{i1},\dots,\omega_{iN_i})$ be a $N_i\times N_i$ diagonal matrix, $\bm\Omega = \mathrm{blockdiag}(\bm\Omega_1,\dots,\bm\Omega_n)$ be a $N\times N$ diagonal matrix, and $\bm\Omega_{nn} = \bfZ^\T\bm\Omega \bfZ$ which is an $n\times n$ diagonal matrix with elements $\sum_{j=1}^{N_i}\omega_{ij}$ for $i=1,\dots,n$. Let $\bfR_\rho = [\calR(s_i,s_j)]_{i,j}$ be an $n\times n$ correlation matrix with parameter $\rho$. 

The Algorithm~\ref{alg:gibbs_fullybayes} describes a one cycle of a partially collapsed Gibbs sampler for a fully Bayesian approach. We first describe the details of the algorithm for a fully Bayesian approach, and specialize it to the  Algorithm for an empirical Bayes approach.

\begin{algorithm}
\small
\renewcommand{\thealgocf}{A1}
\caption{One cycle of a partially collapsed Gibbs sampler.}
 {[1]} Sample $\bm\beta \sim [\bm\beta \mid \bm\omega^{\mathrm{(old)}}, \lambda^{\mathrm{(old)}}, \rho^{\mathrm{(old)}}]$ from multivariate normal, where $u_{1:n}$ is collapsed out. \\
 
 {[2]} Jointly sample $(\phi,\rho, \lambda, \bfu_{1:n})\sim [\phi,\lambda, \bfu_{1:n} \mid \bm\omega^{\mathrm{(old)}}, \bm\beta]$ in two steps: \\
 \quad {[2(i)]} Sample $(\phi,\rho,\lambda) \sim [\phi,\rho,\lambda\mid \bm\omega^{\mathrm{(old)}}, \bm\beta]$ using particle marginal Metropolis–Hastings,\\
 \quad {[2(ii)]} Sample $\bfu_{1:n}\sim [\bfu_{1:n}\mid \phi,\lambda,\bm\omega^{\mathrm{(old)}}, \bm\beta]$ from multivariate normal. \\
 {[3]} Sample $\omega_{ij} \sim [\omega_{ij} \mid \bm\beta, \bfu_{1:n}]$ from P\'olya-Gamma, independently for all $i,j$.\\
\label{alg:gibbs_fullybayes}
\end{algorithm}

\subsection{Step 1}
Conditional on $\bf\Omega$, the likelihood is proportional to $\mathrm{N}_{N}\{\bm\Omega^{-1}(\bfy-0.5 \bm{1}_N); \bfX\bm\beta + \bfZ\bfu_{1:n}, \bm\Omega^{-1}\}$. Thus, integrating out $\bfu_{1:n} \sim \mathrm{N}_n(\bm{0}_n, \lambda \bfR_\rho)$, we have
\[
p(\bm\beta \mid \bm\omega, \lambda, \rho) \propto  \mathrm{N}_{N}\{\bm\Omega^{-1}(\bfy-0.5 \bm{1}_N); \bfX\bm\beta, \bm\Omega^{-1} + \lambda \bfZ \bfR_\rho \bfZ^\T\}\times N_p(\bm\beta; \bm\mu_\beta, \bm\Sigma_\beta),
\]
which yields $(\bm\beta \mid \bm\omega, \lambda, \rho) \sim \mathrm{N}_p(\bfQ_{1}^{-1}\bfb_1, \bfQ_1^{-1})$ with $\bfQ_{1} = \bfX^\T (\bm\Omega^{-1} + \lambda \bfZ \bfR_\rho \bfZ^\T)^{-1} \bfX + \bm\Sigma_\beta^{-1}$ and
$\bfb_1 = \bm\Sigma_\beta^{-1}\bm\mu_\beta + \bfX^\T (\bm\Omega^{-1} + \lambda \bfZ \bfR_{\rho} \bfZ^\T)^{-1}\bm\Omega^{-1} (\bfy - 0.5 \bm{1}_N)$. 
The inversion of $N\times N$ matrix $(\bm\Omega^{-1} + \lambda \bfZ \bfR_\rho \bfZ^\T)$ can be done efficiently using the Woodbury formula 
\begin{equation}
\label{eq:step1_woodbury}
(\bm\Omega^{-1} + \lambda \bfZ \bfR_\rho \bfZ^\T)^{-1} = \bm\Omega - \bm\Omega \bfZ(\lambda^{-1}\bfR_\rho^{-1} + \bm\Omega_{nn})^{-1}\bfZ^\T\bm\Omega     
\end{equation}
which involves the inversion of an $n\times n$ matrix instead. 

Suppose an independent normal scale mixture prior is used for $\bm\beta$, say independent $t$ priors with degrees of freedom $\nu$, mean $\mu_\beta$ and scale $\sigma_{\beta}$. This corresponds to $\beta_k\mid \gamma_k \sim \mathrm{N}(0, \gamma_k)$ and $1/\gamma_k\sim \mathrm{Ga}(\nu/2, \nu\sigma_\beta^2/2)$, independently for $k=1,\dots,p$, and one can add an additional sampling step $1/\gamma_k\sim \mathrm{Ga}(\nu/2+1/2, \nu \sigma_\beta^2/2+\beta_k^2/2)$ in the Algorithm.

\subsection{Step 2}
We first derive Step 2(ii), the full conditional distribution of $\bfu_{1:n}$. Given $\bm\Omega$ and $\bm\beta$, the likelihood $\mathrm{N}_{N}\left\{\bfZ\bfu_{1:n}; \bm\Omega^{-1}(\bfy-0.5 \bm{1}_N) - \bfX\bm\beta, \bm\Omega^{-1}\right\}$ can be recognized as a normal density in terms of $\bfu_{1:n}$ since $Z$ has full column rank. This yields the full conditional distribution
\begin{equation}
\label{eq:step2ii}
p(\bfu_{1:n} \mid  \lambda, \omega, \bm\beta, \rho) \propto \mathrm{N}_n[\bfu_{1:n}; \bm\Omega_{nn}^{-1} \bfZ^\T\bm\Omega  \{\bm\Omega^{-1}(\bfy-0.5 \bm{1}_N) - \bfX\bm\beta\}, \bm\Omega_{nn}^{-1}]\times \mathrm{N}_n(\bfu_{1:n}; \bm{0}, \lambda \bfR_\rho)   
\end{equation}
so $(\bfu_{1:n} \mid \lambda, \omega, \bm\beta, \rho) \sim \mathrm{N}_n(\bfQ_{2}^{-1}\bfb_2, \bfQ_2^{-1})$ with
$\bfQ_2 = \bm\Omega_{nn} + \lambda^{-1}\bfR_\rho^{-1}$ and $\bfb_2= \bfZ^\T \{(\bfy-0.5\bm{1}_N) - \bm\Omega \bfX \bm\beta \}$. 
For Step 2(i), from the expression \eqref{eq:step2ii}, we can obtain a collapsed likelihood with the $\bfu_{1:n}$ marginalized out. Thus, the target distribution of $\lambda$, $\phi$, and $\rho$ is given as 
\begin{equation}
\label{eq:step2i}
    p(\lambda, \phi, \rho\mid \bm\omega, \bm\beta) \propto \underbrace{\mathrm{N}_n[\bm\Omega_{nn}^{-1} Z^\T\bm\Omega  \{\bm\Omega^{-1}(\bfy-0.5 \bm{1}_N) - \bfX\bm\beta\}; \bm{0}, \bm\Omega_{nn}^{-1} + \lambda \bfR_\rho]}_{ \calL(\lambda, \rho)}  \pim(\lambda; \phi)p(\phi)p(\rho)  
\end{equation}
and when $\lambda$ is integrated out, we have $ p(\phi,\rho \mid \bm\omega, \bm\beta) \propto \int \calL(\lambda,\rho)\pim(\lambda; \phi) \rmd\lambda \times  p(\phi)p(\rho)$. 
To jointly draw $\lambda$, $\rho$ and $\phi$ from \eqref{eq:step2i}, we use a particle marginal Metropolis-Hastings sampler \citep{Andrieu2010-tm}. 
Let $L$ be the number of particles and $\lambda^{(1)},\dots,\lambda^{(L)}$ be the current set of particles. Then Step 2(i) proceeds as: (A) Draw candidate $(\phi^\star, \rho^\star) \sim q(\phi^\star, \rho^\star\mid \phi, \rho)$ with some proposal distribution. (B) Draw new particles independently from $\lambda^{\star (1)},\dots,\lambda^{\star (L)}\sim \pim(\lambda; \phi^\star)$. (C) Draw a candidate $\lambda^\star$ among new particles $\lambda^{\star (1)},\dots,\lambda^{\star (L)}$ with probability proportional to $\calL(\lambda^{\star(l)}, \rho^\star)$, $l=1,\dots,L$. (D) Accept $(\phi^\star,\rho^\star,\lambda^\star)$ and set of particles $\{\lambda^{\star(l)}\}_{l=1}^L$ with probability 
\begin{equation*}
\min\left\{1, \frac{ \{\sum_{l=1}^L \calL(\lambda^{\star(l)}, \rho^\star)\}p(\phi^\star)p(\rho^\star) }{\{\sum_{l=1}^L \calL(\lambda^{(l)}, \rho^\star)\}p(\phi)p(\rho)}\times \frac{q(\phi, \rho \mid \phi^\star, \rho^\star)}{q(\phi^\star, \rho^\star \mid \phi, \rho)}\right\}
\end{equation*}
otherwise keep $(\phi, \rho, \lambda)$ and current set of particles $\{\lambda^{(l)}\}_{l=1}^L$. 

In fully Bayesian data analysis examples, we set $L=20$ and use coordinatewise logit transform to map parameters $(\phi,\rho)\in(0,1)\times (0.001, 0.3)$ to $\bbR^2$, and utilize Metropolis-Hastings proposal adaptation settings as in \citet{Haario2001-ms} to determine proposal distribution $q(\phi^\star,\rho^\star\mid \phi,\rho)$. 

\begin{remark} For the joint sampling of $(\lambda, \phi, \rho)$, the particle marginal Metropolis-Hastings sampler \citep{Andrieu2010-tm} is suitable for this setting for the following reasons. First, the parameter $\phi$ only depends on $\lambda$, and conditional sampling between $[\phi|\lambda, -]$ and $[\lambda |\phi, -]$ leads to very high autocorrelation of $\phi$ and $\lambda$, and joint sampling of $\phi$ and $\lambda$ is highly desired. Next, from the joint density $p(\phi)\pim(\lambda;\phi)$, the mixing density $\pim(\lambda;\phi)$ is represented as a alternating series (see Proposition~\ref{prop:mixingdensity}) and may unstable to evaluate, but it is easy to sample $\lambda$ using Theorem~\ref{thm:unimixture} and thus the joint density evaluation can be replaced with Monte Carlo estimates with particles, which is unbiased. Finally, using a simple Metropolis-Hastings proposal may suffer from a low acceptance probability, where the particle marginal Metropolis-Hastings algorithm leverages multiple proposals and improves the acceptance rate. 
\end{remark}

\subsection{Step 3 and additional remarks}
Step 3 corresponds to sampling auxiliary variables $\omega_{ij} \mid \bm\beta, \bfu_{1:n}\sim \textrm{P\'olya-Gamma}\{1, \bfx_{ij}^\T \bm\beta + u(s_i)\}$ for all $i,j$, which is straightforward from equation \eqref{eq:expandedmodel1}, see \citet{Polson2013-gb}.

When it comes to the empirical Bayes approach, where $\phi$ is fixed at $\hat\phi$, there are fewer benefits of using the particle marginal Metropolis-Hastings algorithm. In this case, we suggest sampling $\rho$ given $\lambda$ (Stage 2 - Step 2 of Algorithm 1) and $\lambda$ given $\rho$ (Stage 2 - Step 3 of Algorithm 1) using the Metropolis-Hastings algorithm based on expression \eqref{eq:step2i}, where we use an adaptive Metropolis-Hastings proposal for $\rho$ and Metropolis-Hastings for $\lambda$ with prior $\pim(\lambda; \phi)$ as an independent proposal distribution, which avoids evaluation of $\pim(\lambda; \phi)$ as they cancel out in the acceptance probability.

\subsection{Details of scalable computation with low-rank dependence structure}

We describe in detail how Algorithm~\ref{alg:gibbs} becomes scalable with the correlation kernel \eqref{eq:corr_lowrank} with a low-rank structure. Denoting $\bfR_{nq} = [\calR(s_i,\tilde{s}_k)]_{i=1, k=1}^{n,q}$ as an $n\times q$ matrix and $\bfR_{qn} = \bfR_{nq}^\T$, we have a low-rank structured correlation matrix for $n$ realizations $\tilde{\bfR} = [\tilde{\calR}(s_i,s_{i'})]_{i,i'=1}^{n,n}= \bfR_{nq}\bfR_{qq}^{-1}\bfR_{qn} + \bfD_{nn}$ where $\bfD_{nn}$ is a diagonal matrix with elements $1-\bfr(s_i)^\T \bfR_{qq}^{-1}\bfr(s_{i})$ for $i=1,\dots,n$. 

First, step 1 of Algorithm 1 or Algorithm 2 involves $n\times n$ matrix inversion in equation \eqref{eq:step1_woodbury}. This can be reduced to a $q\times q$ matrix inversion problem using the Woodbury matrix identity,
\begin{align}
 (\lambda^{-1}\tilde{\bfR}^{-1} + \bm\Omega_{nn})^{-1} &= \lambda\{(\bfR_{nq}\bfR_{qq}^{-1}\bfR_{qn}+\bfD_{nn})^{-1} + \lambda \bm\Omega_{nn}\}^{-1} \nonumber  \\
 &= \lambda \bm\Delta_1 + \lambda \bm\Delta_1 \bfD_{nn}^{-1}\bfR_{nq}(\bfR_{qq}+ \bfR_{qn}\bm\Delta_2 \bfR_{nq})^{-1}\bfR_{qn} \bfD_{nn}^{-1}\Delta_1
 \label{eq:step1woodburytwice}
\end{align}
where $\bm\Delta_1 = (\lambda \bm\Omega_{nn} + \bfD_{nn}^{-1})^{-1}$, $\bm\Delta_2 = \bfD_{nn}^{-1} - \bfD_{nn}^{-1}\bm\Delta_1 \bfD_{nn}^{-1}$ are diagonal matrices. 

Next, step 2, step 3 of Algorithm 1, and step 2(i) of Algorithm 2 involve $n$-dimensional normal density evaluation with covariance $\lambda \tilde{\bfR} + \bm\Omega_{nn}^{-1}$. Its inverse and determinant can be efficiently calculated as
\begin{align*}
(\lambda \tilde{\bfR} + \bm\Omega_{nn}^{-1})^{-1} &= (\lambda \bfR_{nq}\bfR_{qq}^{-1}\bfR_{qn} + \lambda \bfD_{nn} + \bm\Omega_{nn}^{-1})^{-1}\\
&= \bm\Delta_3 - \bm\Delta_3\bfR_{nq}(\lambda^{-1} \bfR_{qq}+ \bfR_{qn}\bm\Delta_3 \bfR_{nq} )^{-1}\bfR_{qn}\bm\Delta_3,\\
|\lambda \tilde{\bfR} + \bm\Omega_{nn}^{-1}| &= |\bm\Delta_3^{-1}|\times | \lambda \bfR_{qq}^{-1}|\times  |\lambda^{-1} \bfR_{qq} + \bfR_{qn}\Delta_3 \bfR_{nq}| 
\end{align*}
where $\bm\Delta_3 = (\lambda \bfD_{nn} + \bm\Omega_{nn}^{-1})^{-1}$ is a diagonal matrix.

Finally, step 2(ii) corresponds to sampling $(\bfu_{1:n} \mid \lambda, \bm\omega, \bm\beta) \sim \mathrm{N}_n(\bfV_{2}\bfb_2, \bfV_2)$ with
$\bfV_2 = (\bm\Omega_{nn} + \lambda^{-1}\tilde{\bfR}^{-1})^{-1}$ and $\bfb_2= \bfZ^\T \{(\bfy-0.5\bm{1}_N) - \bm\Omega \bfX \bm\beta \}$. Using the expression \eqref{eq:step1woodburytwice} for $\bfV_2$, $\bfu_{1:n}$ can be sampled from the following steps: (1) Sample from $q$-dimensional multivariate normal  $\tilde{\bfu}\sim \mathrm{N}_q\{\bm{0}, (\bfR_{qq}+\bfR_{qn}\Delta_2\bfR_{nq})^{-1}\}$, (2) Sample $\epsilon\sim \mathrm{N}_n(\bm{0}, \lambda\bm\Delta_1)$, (3) Set $\bfu_{1:n} = \bfV_2\bfb_2 + \lambda^{1/2}\bm\Delta_1 \bfD_{nn}^{-1}\bfR_{nq}\tilde{\bfu} + \epsilon$, which reduces to sampling from a $q$-dimensional multivariate normal instead of an $n$-dimensional one. 

\section{Details of multivariate bridge  distribution}
\label{appendix:multibridge}
We describe the probability density function of the mixing distribution $\pim(\phi)$ and multivariate bridge distribution. We also remark on the sampling procedure for the mixing distribution. 
\begin{proposition} The density of normal variance mixing distribution of bridge distribution is 
    \begin{equation}
\pim(\lambda; \phi) = \frac{(\pi/2)^{1/2}}{ \phi^2\lambda^{3/2}}\sum_{k=1}^\infty (-1)^{k+1}C_k(\phi) \exp\left\{-\frac{\pi^2C_k(\phi)^2}{2 \phi^2 \lambda}\right\}, \quad \lambda>0,        
    \end{equation}
    where $C_k(\phi) = k- 0.5 + (-1)^k(\phi-0.5)$. 
\end{proposition}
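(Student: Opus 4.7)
The plan is to invert the Laplace transform of $\lambda$ that was established in the proof of Theorem~\ref{thm:unimixture}, namely
\[
E[e^{-s\lambda}] = \frac{\sinh(\pi\sqrt{2s})}{\phi \sinh(\pi\sqrt{2s}/\phi)}, \quad s>0,
\]
by rewriting it as a signed sum of Laplace transforms of Lévy (one-sided stable-$1/2$) densities and invoking uniqueness of the Laplace transform. Setting $t = \pi\sqrt{2s}/\phi$, the target becomes $\sinh(\phi t)/[\phi \sinh(t)]$, and the key geometric-series identity
\[
\frac{\sinh(\phi t)}{\sinh(t)} = e^{-t}(e^{\phi t}-e^{-\phi t})\sum_{n=0}^\infty e^{-2nt} = \sum_{n=0}^\infty\left[e^{-(2n+1-\phi)t}-e^{-(2n+1+\phi)t}\right],
\]
valid and rapidly convergent for $t>0$, expresses the right-hand side as a sum of pure exponentials. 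Substituting back $t = \pi\sqrt{2s}/\phi$, each summand takes the form $\phi^{-1}e^{-a\sqrt{2s}}$ with $a = (2n+1\pm\phi)\pi/\phi > 0$.

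Next, I would invert each exponential using the classical Laplace pair $e^{-a\sqrt{2s}} \leftrightarrow (a/\sqrt{2\pi\lambda^3})\exp\{-a^2/(2\lambda)\}$. Collecting the common prefactor $\pi/(\phi^2\sqrt{2\pi}) = (\pi/2)^{1/2}/\phi^2$ and reindexing by setting $k = 2n+1$ for the ``$+$'' contributions and $k = 2n+2$ for the ``$-$'' contributions, the sign pattern becomes exactly $(-1)^{k+1}$, and the coefficients become $k-\phi$ (odd $k$) and $k-1+\phi$ (even $k$). Both expressions coincide with $C_k(\phi) = k-1/2+(-1)^k(\phi-1/2)$, yielding the stated formula.

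The only nontrivial step is justifying the term-by-term Laplace inversion. I would show that the Gaussian-type factors $\exp\{-\pi^2 C_k(\phi)^2/(2\phi^2\lambda)\}$ make the series converge absolutely and uniformly on compact subsets of $(0,\infty)$, so Fubini's theorem permits interchanging summation and Laplace integration; uniqueness of the Laplace transform on $[0,\infty)$ then identifies the resulting series with $\pim(\lambda;\phi)$. This is where the main work lies, but the super-exponential decay in $k$ makes the dominating bounds routine.
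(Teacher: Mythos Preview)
Your proposal is correct and works exactly as you describe: the geometric expansion of $\sinh(\phi t)/\sinh(t)$ is valid for $t>0$, the Laplace pair $e^{-a\sqrt{2s}} \leftrightarrow a(2\pi\lambda^3)^{-1/2}\exp\{-a^2/(2\lambda)\}$ is the standard L\'evy density identity, and the reindexing to $C_k(\phi)$ is checked directly on odd and even $k$. The Fubini step is indeed routine, since the absolute series of Laplace integrals reduces to $\phi^{-1}\sum_k \exp\{-\pi C_k(\phi)\sqrt{2s}/\phi\}$, which is geometrically summable for every $s>0$.

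The paper takes a different and much shorter route. Rather than inverting the Laplace transform by hand, it observes that the Laplace transform $\sinh(\pi\sqrt{2s})/\{\phi\sinh(\pi\sqrt{2s}/\phi)\}$ matches, after the scale change $\lambda \stackrel{d}{=} (\pi^2/\phi^2)\,T_1^\phi(R_3)$, the Laplace transform of the Brownian-excursion functional $T_1^\phi(R_3)$ studied by Biane, Pitman and Yor (2001). The density of $T_1^\phi(R_3)$ is then quoted directly from that reference, and the stated formula follows by change of variables. In effect, the paper outsources your entire computation to an existing catalogue of such densities. Your approach buys self-containment and makes the mechanism (term-by-term inversion of a signed exponential series) completely transparent, at the cost of a page of calculation; the paper's approach is a two-line citation but requires the reader to locate and trust a fairly specialized formula in the Biane--Pitman--Yor paper.
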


\begin{proof}Recall that from the proof of Theorem~\ref{thm:unimixture}, the Laplace transform of $\lambda \sim \pim(\phi)$ is $E(e^{-s\lambda}) = \sinh(\pi (2s)^{1/2})/ \left\{\phi\sinh(\pi (2s)^{1/2}/\phi)\right\}$. 
    From the random variable $T_1^\phi(R_3)$ studied by \citet{Biane2001-nz} which has the Laplace transform $ E(e^{-sT_1^\phi(R_3)}) = \sinh\{ \phi (2 s)^{1/2} \}/[\phi \sinh\{( 2 s)^{1/2}\}]$, we can see that $\lambda$ is equal in distribution to $(\pi^2/\phi^2) T_1^\phi(R_3)$. Then, the proposition follows directly from the density of $T_1^{\phi}(R_3)$, which is given on page 460 of \citet{Biane2001-nz}. 
\end{proof}

Next, we present the density function of the multivariate bridge distribution. In practice, this alternating sum representation is not used in the posterior inference procedure. This is because the inference of $\phi$ is based on the collapsed conditional where the multivariate bridge distribution is further integrated in Step 2 (i) of Algorithm~\ref{alg:gibbs}, and conditioning on the mixing variable $\lambda$ is the key to preserving the conditional conjugacy of the remaining steps.

\begin{proposition}
\label{prop:mixingdensity}
For $d\ge 2$, the density function of a $d$-dimensional multivariate bridge distribution with parameters $\phi$ and positive definite $R$ can be represented as  
\begin{equation}
\pibr(u; \phi, R) = \frac{\Gamma(d/2+1/2)}{\phi^2\pi^{(d-1)/2}|R|^{1/2}}\sum_{k=1}^\infty \frac{ 
(-1)^{k+1} C_k(\phi)}{\{\pi^2 C_k(\phi)^2/\phi^2 + u^\T R^{-1}u\}^{(d+1)/2}}
\end{equation}
with  $C_k(\phi) = k- 0.5 + (-1)^k(\phi-0.5)$. 
\end{proposition}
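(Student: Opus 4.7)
The plan is to derive the density by directly integrating out the mixing variable in the scale mixture of normals representation of Definition~\ref{def:bp}. By construction, the $d$-dimensional bridge density equals
\[
\pibr(u;\phi,R) \;=\; \int_0^\infty \mathrm{N}_d(u;\bm{0}, \lambda R)\, \pim(\lambda;\phi)\, \rmd\lambda.
\]
I would substitute the explicit alternating series for $\pim(\lambda;\phi)$ established in the preceding proposition. Multiplying the Gaussian factor $(2\pi\lambda)^{-d/2}|R|^{-1/2}\exp\{-u^\T R^{-1} u/(2\lambda)\}$ by the $k$-th term of $\pim$ yields an integrand of the form
\[
(-1)^{k+1}\, c_d\, |R|^{-1/2}\phi^{-2}\, C_k(\phi)\, \lambda^{-(d+3)/2}\exp\{-a_k/\lambda\}, \quad a_k = \tfrac{1}{2}\!\left(u^\T R^{-1} u + \pi^2 C_k(\phi)^2/\phi^2\right),
\]
where $c_d=(2\pi)^{-d/2}(\pi/2)^{1/2}$ collects the numerical constants independent of $k$.

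Next, I would interchange sum and integral and evaluate each single-term integral via the substitution $t = 1/\lambda$, which gives
\[
\int_0^\infty \lambda^{-(d+3)/2}\exp\{-a_k/\lambda\}\,\rmd\lambda \;=\; \Gamma\!\left(\tfrac{d+1}{2}\right) a_k^{-(d+1)/2}.
\]
Inserting the explicit form of $a_k$, absorbing the factor $2^{(d+1)/2}$ into the bracket $(u^\T R^{-1} u + \pi^2 C_k(\phi)^2/\phi^2)^{(d+1)/2}$, and collapsing the accumulated powers of $\pi$ and $2$ (a short bookkeeping check shows they reduce cleanly to $\pi^{-(d-1)/2}$) produces exactly the stated formula.

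The only real subtlety is justifying the termwise integration, since the series for $\pim(\lambda;\phi)$ is only conditionally convergent. I would handle this by pairing consecutive terms $k=2m-1$ and $k=2m$ so that the grouped series becomes one of nonnegative summands. The exponential factor $\exp\{-\pi^2 k^2/(2\phi^2\lambda)\}$ then supplies a majorant that decays geometrically in $m$, is integrable in $\lambda$ near infinity (where the integrand behaves like $\lambda^{-(d+3)/2}$), and is dominated near zero by $\exp\{-u^\T R^{-1} u/(2\lambda)\}$, so Tonelli's theorem permits the interchange. Everything else is a routine computation resting on the preceding proposition.
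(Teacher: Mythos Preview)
Your proposal is correct and follows essentially the same route as the paper: insert the alternating series for $\pim$, swap sum and integral, and evaluate each $\lambda$-integral (the paper phrases this as recognizing an inverse gamma kernel; your substitution $t=1/\lambda$ is equivalent, and your constant bookkeeping is right). The only difference is in justifying the interchange: rather than pairing terms, the paper simply invokes Fubini for $d\ge 2$, which applies directly because taking absolute values termwise and integrating over $\lambda$ gives a series whose $k$-th term is of order $C_k(\phi)\{\pi^2 C_k(\phi)^2/\phi^2 + u^\T R^{-1}u\}^{-(d+1)/2}\sim k^{-d}$ --- this is in fact the origin of the restriction $d\ge 2$, and it makes your pairing argument unnecessary (though not incorrect).
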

\begin{proof}
Let $D(\phi) = \{\phi^2 2^{(d+1)/2}\pi^{(d-1)/2}|R|^{1/2}\}^{-1}$. Then
\begin{align*}
    \pibr(u; \phi, R) &=  \frac{1}{(2\pi)^{d/2}|R|^{1/2}}\int_0^{\infty}\frac{1}{\lambda^{d/2}}\exp\left(-\frac{1}{2\lambda }u^\T R^{-1}u\right)\pim(\lambda; \phi) d\lambda\\
    &= D(\phi) \int_{0}^\infty     
    \sum_{k=1}^\infty\frac{ (-1)^{k+1}C_k(\phi)}{\lambda^{(d+3)/2}} \exp\left[-\frac{1}{2\lambda}\left\{\frac{\pi^2C_k(\phi)^2}{\phi^2 }+u^\T R^{-1}u \right\}\right] d\lambda\\
    &\stackrel{(*)}{=} D(\phi) \sum_{k=1}^\infty (-1)^{k+1}C_k(\phi) \int_{0}^\infty\frac{1}{\lambda^{(d+3)/2}}\exp\left[-\frac{1}{2\lambda}\left\{\frac{\pi^2C_k(\phi)^2}{\phi^2 }+u^\T R^{-1}u \right\}\right] d\lambda\\
    &\stackrel{(**)}{=} D(\phi) \sum_{k=1}^\infty (-1)^{k+1}C_k(\phi) \frac{\Gamma(d/2+1/2)2^{(d+1)/2}}{\{\pi^2 C_k(\phi)^2/\phi^2 + u^\T R^{-1}u\}^{(d+1)/2}}\\
    &= \frac{\Gamma(d/2+1/2)}{\phi^2\pi^{(d-1)/2}|R|^{1/2}}\sum_{k=1}^\infty \frac{ 
(-1)^{k+1} C_k(\phi)}{\{\pi^2 C_k(\phi)^2/\phi^2 + u^\T R^{-1}u\}^{(d+1)/2}}
\end{align*}
where $(*)$ holds for $d\ge 2$ due to Fubini, and $(**)$ is from recognizing the integrand as an inverse gamma density with parameters $(d+1)/2$ and $\{\pi^2 C_k(\phi)^2/\phi^2 + u^\T R^{-1}u\}/2$. 
\end{proof}

Approximations can be used to sample
$\lambda\sim \pim(\phi)$. A naive method is based on finite truncation at some large level $K_1$, namely, sampling from $2\phi^{-2}\sum_{k=1}^K A_kB_k/k^2$ with $A_k\sim \mathrm{Exp}(1)$ and $B_k\sim \mathrm{Ber}(1-\phi^2)$. However, this approach leads to non-zero probability
 $\phi^{2K_1}$ of $\lambda$ being exactly zero; we have observed this in practice when $\phi$ is close to one. To avoid this issue, we use geometric random variables corresponding to the waiting time between Bernoulli successes. With some large $K_2$, say $K_2 = 100$, we first generate $K_2$ independent geometric random variables $C_1,\dots,C_{K_2}$ with success probability $1-\phi^2$. Then, we approximately sample $\lambda$ from sum of $K_2$ independent exponential distributions with scales $2/(\phi^2D_k^2)$ for $k=1,\dots,K_2$, where $D_k = \sum_{l=1}^k C_l$.

\newpage 

\section{Additional results for simulation study}
\label{appendix:simdetails}

\begin{table}[h]
\footnotesize
\caption{Comparison of estimated population-averaged effects based on 200 simulations in terms of bias, root mean squared error (RMSE), average length of 95\% confidence or credible interval (CI$_{.95}$), and coverage probabilities at nominal level 0.95 (Cover).}
    \label{table:sim_beta0_mar}
    \centering
    \begin{tabular}{cc c c c c c c c c}
    \toprule
     \multicolumn{2}{c}{Population-averaged ($\hat\beta_0^\textsc{m}$)} & \multicolumn{4}{c}{Data from bridge process} & \multicolumn{4}{c}{Data from Gaussian copula}\\
     \cmidrule(lr){3-4}\cmidrule(lr){5-6} \cmidrule(lr){7-8}\cmidrule(lr){9-10}
     & & \multicolumn{2}{c}{$\phi = 0.7$} & \multicolumn{2}{c}{$\phi = 0.9$} & \multicolumn{2}{c}{$\phi = 0.7$} & \multicolumn{2}{c}{$\phi = 0.9$} \\
      \cmidrule(lr){3-6} \cmidrule(lr){7-10}
     Setting & Method & Bias & RMSE & Bias & RMSE & Bias & RMSE & Bias & RMSE\\ 
     \midrule
    \multirow{3}{*}{\makecell{ $\rho = 0.05$ }} &  SpGEE & 0.001 & 0.237 & -0.019 & 0.165 & -0.005 & 0.227 & -0.024 & 0.147\\
    & BrP & -0.003 & 0.231& -0.019 & 0.168 & -0.003 & 0.235 & -0.023 & 0.151\\
    & BrP-FB & -0.003 & 0.245 & -0.022 & 0.169 & -0.005 & 0.237 & -0.021 & 0.135 \\
    \midrule
    \multirow{3}{*}{\makecell{ $\rho = 0.1$ }}& SpGEE & -0.013 & 0.386 & -0.018 & 0.298 & -0.011 & 0.380 & -0.034 & 0.231 \\
        & BrP & -0.001 & 0.028 & -0.024 & 0.310 & -0.002 & 0.404 & -0.032 & 0.247  \\
    & BrP-FB & -0.002 & 0.382 & -0.027 & 0.299 & 0.003 & 0.377 & -0.025 & 0.220 \\
    \bottomrule
    \end{tabular}
    \vspace{-3mm}
\begin{flushleft} 
{\footnotesize SpGEE, spatial generalized estimating equation method of \citet{Cattelan2018-pp}; BrP, proposed bridge process model; BrP-FB, bridge process model with fully Bayesian approach on $\phi$.}
\end{flushleft}
\end{table}

\begin{table}[h]
\footnotesize
\caption{Comparison of estimated site-specific effects based on 200 simulations in terms of bias, root mean squared error, average length of 95\% confidence or credible interval, and coverage probabilities at nominal level 0.95.}
    \label{table:sim_beta0_cond}
    \centering
    \begin{tabular}{cc c c c c c c c c}
    \toprule
     \multicolumn{2}{c}{Site-specific ($\hat\beta_0$)} & \multicolumn{4}{c}{Data from bridge process} & \multicolumn{4}{c}{Data from Gaussian copula}\\
     \cmidrule(lr){3-4}\cmidrule(lr){5-6} \cmidrule(lr){7-8}\cmidrule(lr){9-10}
     & & \multicolumn{2}{c}{$\phi = 0.7$} & \multicolumn{2}{c}{$\phi = 0.9$} & \multicolumn{2}{c}{$\phi = 0.7$} & \multicolumn{2}{c}{$\phi = 0.9$} \\
      \cmidrule(lr){3-6} \cmidrule(lr){7-10}
     Setting & Method & Bias & RMSE & Bias & RMSE & Bias & RMSE & Bias & RMSE\\ 
     \midrule
    \multirow{3}{*}{\makecell{ $\rho = 0.05$ }} &  GP & -0.022 & 0.412 & -0.041 & 0.274 & -0.005 & 0.338 & -0.026 & 0.171 \\
    & BrP & -0.022 & 0.401 & -0.040 & 0.278 & -0.004 & 0.337 & -0.025 &0.168 \\
    & BrP-FB & -0.021 & 0.413 & -0.040 & 0.276 & -0.007 & 0.339 & -0.026 & 0.170  \\
    \midrule
    \multirow{3}{*}{\makecell{ $\rho = 0.1$ }}& GP & -0.019 & 0.617 & -0.057 & 0.505 & 0.006 & 0.568 & -0.035 & 0.295\\
        & BrP & -0.018 & 0.616 & -0.056 & 0.497 & 0.000 & 0.560 &-0.036 &  0.275 \\
    & BrP-FB & -0.019 & 0.621 &-0.059 & 0.516& 0.008 & 0.568 & -0.033 & 0.293   \\
    \bottomrule
    \end{tabular}
    \vspace{-3mm}
\begin{flushleft} 
{\footnotesize SpGEE, spatial generalized estimating equation method of \citet{Cattelan2018-pp}; BrP, proposed bridge process model; BrP-FB, bridge process model with fully Bayesian approach on $\phi$.}
\end{flushleft}
\end{table}

\newpage

\section{Additional results for Gambia malaria data study}
\label{appendix:gambiadetails}

To better motivate the need for the spatial logistic model in analyzing the Gambia data, we analyzed the spatial dependence of the residuals through the variogram and Moran's I test \citep{Wikle2019-tq}. We use village-level Pearson residual, defined as
\[
r_i = \frac{1}{\sqrt{N_i}}\sum_{j=1}^{N_i}\frac{Y_{ij} - \hat{p}_{ij}}{\sqrt{\hat{p}_{ij}(1-\hat{p}_{ij})}}
\]
for village $i$ ($\hat{p}_{ij}$ is an estimated probability of $Y_{ij}$), described in Section 3.1. of \citet{Diggle2002-wc}. Specifically, we analyze the village-level Pearson residuals under three models: (A) the logistic random intercept model (non-spatial), (B) the bridge process random effect logistic model, and (C) the Gaussian process random effect logistic model.

\begin{figure}
    \centering    
    \includegraphics[width=\linewidth]{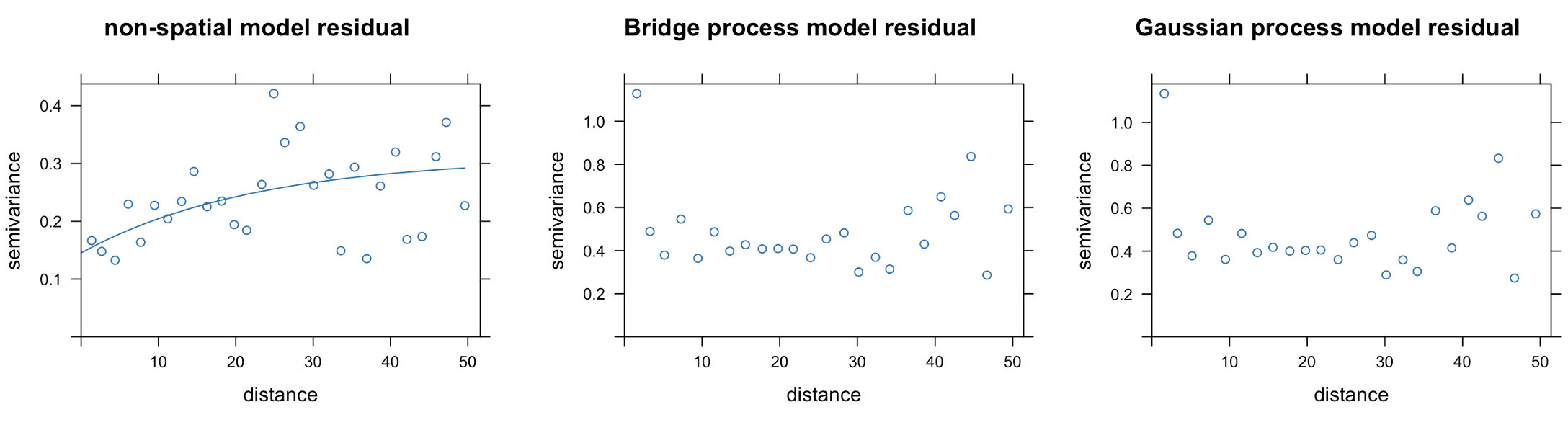}
    \caption{
Empirical variograms of village-level residuals $r_i$ for three logistic regression models (distance in km). (Left, A) Non-spatial random-intercept logistic model, which exhibits a clear increasing trend with distance. (Center, B) Bridge-process random-effect model, and (Right, C) Gaussian-process random-effect model, both of which show no evident spatial pattern. For (A), the fitted exponential variogram model is overlaid. For (B) and (C), fitting an exponential variogram model did not converge.
}
    \label{fig:variograms}
\end{figure}

The empirical variograms based on residuals from the three models are displayed in Figure~\ref{fig:variograms}. The variogram for model (A) shows a clear increasing pattern with distance, indicating substantial residual spatial dependence and motivating the need for spatial modeling. In contrast, the variograms for models (B) and (C) exhibit no discernible spatial structure, suggesting that the spatial random effects in these models successfully account for the underlying spatial dependence.

Moran’s I test using a 4-nearest-neighbor weight matrix further supports these observations. For model (A), the test yields $I = 0.334$ ($p\text{-value} < 10^{-5}$), providing strong evidence against the null hypothesis of no spatial correlation. For models (B) and (C), the results are $I = -0.193$ ($p\text{-value} = 0.987$) and $I = -0.202$ ($p\text{-value} = 0.991$), respectively, indicating no evidence of residual spatial dependence.


\begin{figure}[h]
\centering
\includegraphics[width=0.9\textwidth]{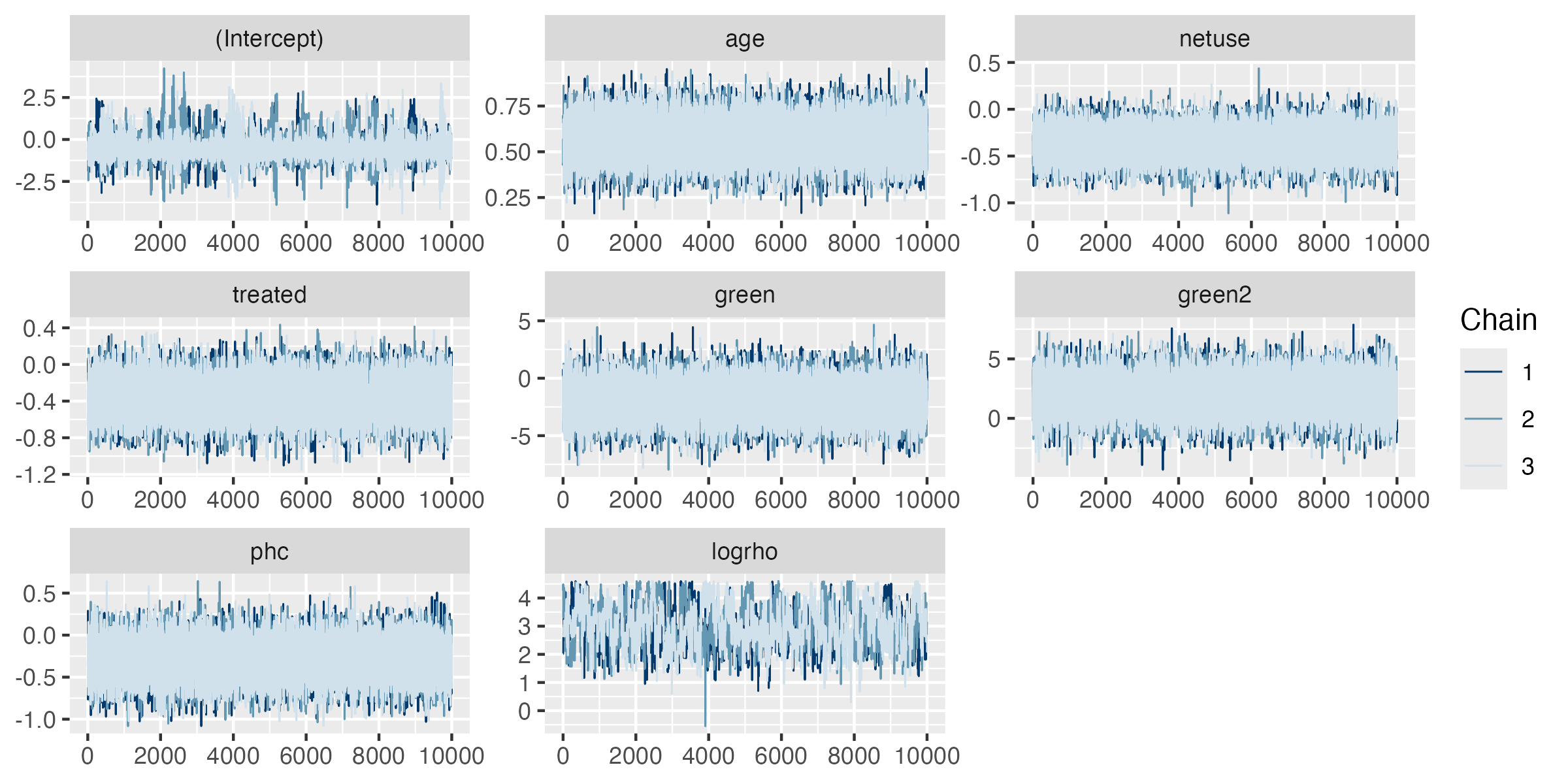}
\vspace{5mm}
\includegraphics[width=0.9\textwidth]{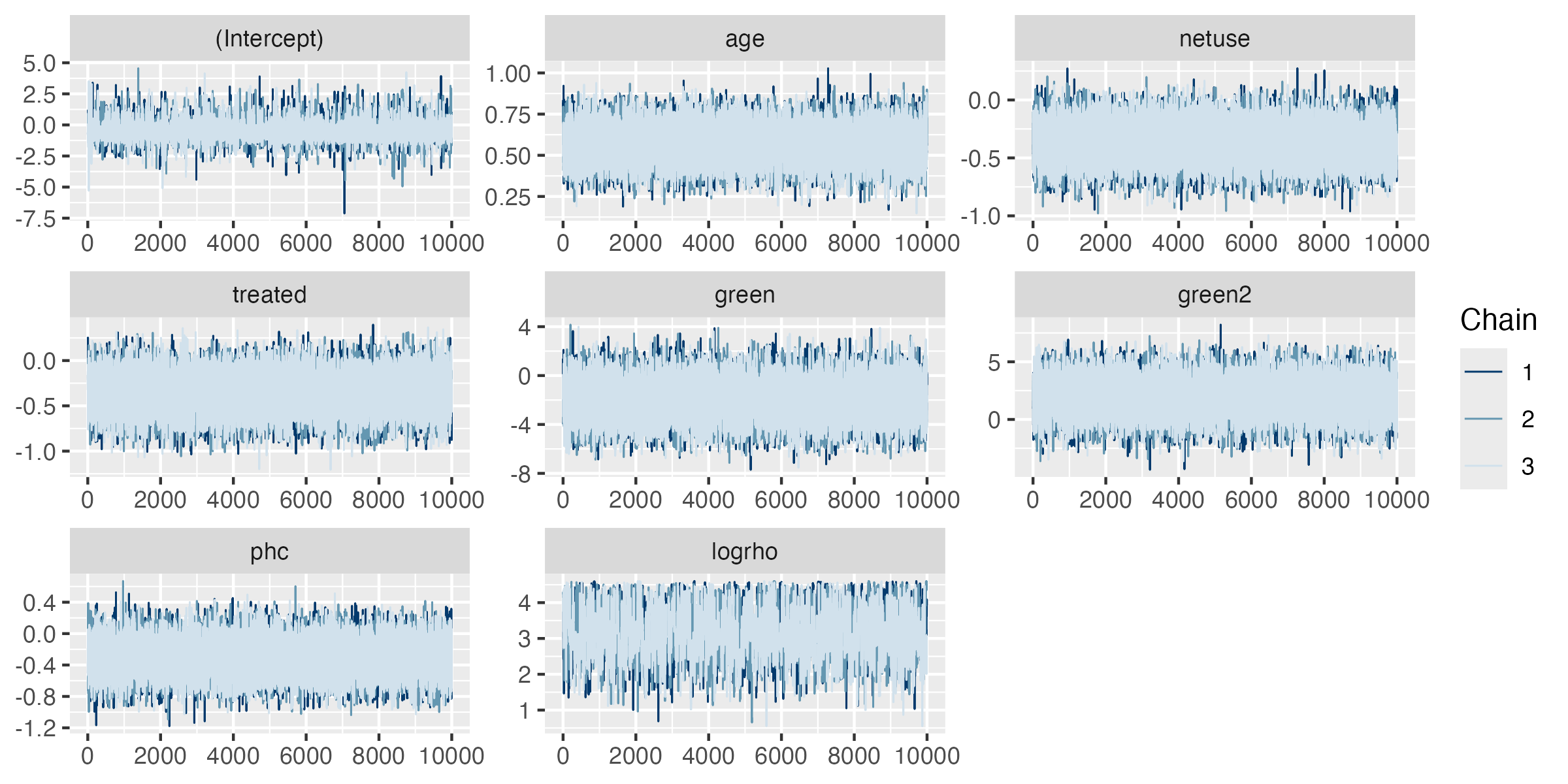}
\caption{Markov chain Monte Carlo trace plots for the Gambia childhood malaria data analysis based on the Algorithm~\ref{alg:gibbs} based on 3 different chains. (Top) Proposed bridge process random effects logistic model, (Bottom) Gaussian process random effects logistic model.}
 \label{fig:mcmc_bridge}
\end{figure}

Finally, Figure~\ref{fig:mcmc_bridge} shows MCMC trace plots of model parameters from 3 different chains, illustrating good mixing for both the bridge process random effect model and the Gaussian process random effect model.

\end{document}